	\theoremstyle{plain}
	\newtheorem{lemma}{Lemma}
	 \newtheorem{proof of proposition}{Proof of Proposition}
	\theoremstyle{definition}
\renewcommand{\r}{{\mathbb R}}
\newcommand{\abs}[1]{|#1|}
\newcommand{\beqn}{\begin{eqnarray*}}
\newcommand{\eeqn}{\end{eqnarray*}}
\newcommand{\pic}[2]{\includegraphics[scale=#1]{#2}}
\newcommand{\picc}[2]{\begin{center}\pic{#1}{#2}\end{center}}
\newcommand{\be}[1]{\begin{equation}\label{#1}}
\newcommand{\ee}{\end{equation}}
\newcommand{\bi}{\begin{itemize}}
\newcommand{\ei}{\end{itemize}}
\newcommand{\ben}{\begin{enumerate}}
\newcommand{\een}{\end{enumerate}}
\title{Remarks on a population-level model of chemotaxis:\\
advection-diffusion approximation and simulations}
\author{{Zahra Aminzare and Eduardo D. Sontag}\\ \\
{\small Department of Mathematics, Rutgers University,}\\
{\small Piscataway, NJ 08854-8019 USA}}
\begin{document}
\maketitle

\begin{abstract}
This note works out an advection-diffusion approximation to the density of a
population of \emph{E.\ coli} bacteria undergoing chemotaxis in a
one-dimensional space.
Simulations show the high quality of predictions under a shallow-gradient
regime.
\end{abstract}

\section{Introduction} \label{introduction}

The chemotactic behavior of \emph{E.\ coli} has been studied widely at both
the microscopic and macroscopic levels. The movement of bacteria involves a
directed movement (run) and a random turning (tumble). 
Each individual carries an internal state which, in the presence of a time and
space dependent external signal, may be modeled as evolving according to a
system of ordinary differential equations.
In the presence of a signal (typically, a nutrient) in the environment, the
individual changes its direction%
\footnote{Potentially, its speed may change too, though this seems not to be a
  substantial factor in \emph{E.\ coli} chemotaxis.}
at random, with a tumbling rate which depends on the internal state, biasing
moves toward more favorable environments or away from noxious substances.
This random reorientation introduces a stochastic character to the evolution
equations, and the population behavior of such hybrid systems is modeled by
jump-Markov state-dependent systems.

The transport equation that describes a jump-markov system is very difficult to study
mathematically, and cannot be validated by typical experimental techniques
such as optical density measurements of bacteria in microfluidics chambers.
Thus it is of great interest to derive a simpler macroscopic
equation for the density of bacteria from microscopic equations.
In addition, the current interest in scale invariant transient behavior
(``fold-change detection,'' see for example
\cite{shoval10,stocker_shimizu11,shoval_alon_sontag_2011})
requires such macroscopic descriptions when starting from the jump-Markov
model, as discussed in~\cite{11cdc_shoval_alon_sontag}.
Our goal in this note is to work out, in a shallow-gradient regime, the tools
developed by Erban, Othmer, and Grunbaum
\cite{Erban_Othmer_2004,Grunbaum_2000} for a mechanistically realistic model of
\emph{E.\ coli} chemotaxis.
In the case of exponential gradients, the result is a constant-coefficient
advection-diffusion equation.
We provide the calculations as well as an agent-based simulation that verifies
the theoretical predictions, for one-dimensional motions.
Future work will expand these considerations to two and three dimensions.

\subsection{Preliminaries}

Let $p(x, y, \nu, t)$ be a density function describing a population of agents
(for example, bacteria), modeled in a $2N+m$ dimensional phase space, where at
time $t$, $x=(x_1,\ldots,x_N)\in \r^N$ ($N=1,2,3$; we soon specialize to
$N=1$) denotes the position of a cell,
$y=(y_1, \ldots,y_m)\in Y\subset \r^m $ denotes the internal dynamics of a
cell, and $\nu\in V\subset\r^N$ denotes its velocity.
Also, $S(x, t)=(S_1, \ldots, S_M)\in \r^M$ denotes the concentration of
extracellular signals in the environment. 

We assume that the following system of ordinary differential equations
describes the evolution of the intracellular state, in the presence of the
extracellular signal $S$:
\begin{equation}
\label{deterministic_model}
\frac{dy}{dt}\;=\; f(y, S)
\end{equation}
where $f\colon \r^m\times\r^M\to \r^m$ is a continuously differentiable
function with respect to each component, i.e.,  $f\in C^1(\r^m\times\r^M)$. 

The evolution of $p$ with turning rate $\lambda\;=\;\lambda(y)$ is governed by
the following transport (or ``Fokker-Planck'' or ``forward Kolmogorov'')
equation: 
\begin{equation}
\label{transport}
\frac{\partial p}{\partial t}+\nabla_x\cdot \nu p +\nabla_y\cdot fp = -\lambda(y)p +\displaystyle\int_V \lambda(y) T(y, \nu, \nu')p(x, y, \nu', t)\;d \nu'
\end{equation}
where the nonnegative kernel $T(y, \nu, \nu')$ is the probability that the bacteria changes the velocity from $\nu'$ to $\nu$ if a change of direction occurs. Also 
\[\displaystyle\int_V T(y, \nu, \nu') \;d\nu=1.\]

The main goal of this note is to derive a macroscopic model for chemotaxis
using the microscopic model (\ref{transport}), i.e., we want to find an
equation to describe the evolution of the marginal density:
\begin{equation}
n(x, t)\;=\;\displaystyle\int_V\displaystyle\int_Y p(x, y, \nu, t)\;dy d\nu.
\end{equation}
As remarked in the introduction, this is of interest both because of
experimental and theoretical reasons, in particular in the context of
scale-invariant sensing~\cite{11cdc_shoval_alon_sontag}.

\section{Chemotaxis equation in one dimensional movement}

In this section, for simplicity, we study the movement of agents in one
dimension and assuming a one-dimensional state, i.e., $m=N=M=1$.
We also take the speed, $\nu$, as constant.
Let $p^{\pm}(x, y, t)$ denote the density of the particles that at time $t$,
are located at position $x$, with the internal state $y$, and with the
constant speed $\nu$, and moving to the right ($+$) or left ($-$)
respectively. 
We assume that the following decay condition:
\begin{equation}\label{condition_p+-}
p^{\pm}(x, y, t)\;\leq\; A(x, t)e^{-\alpha(x,t) y},
\end{equation}
for some functions $A, \alpha\colon\r\times[0, \infty)\to\r_{>0}$.
The internal state evolves according to the following ODE system:
\begin{equation}
\label{deterministic_model_general}
\frac{dy}{dt}\;=\;f^{\pm}(y, \nu, S, S'),
\end{equation}
where $f^{\pm}\colon\r\times\r\times\r\times\r\to\r$ are continuously
differentiable functions in each argument that describe the evolution of
internal state of bacteria which move to the right ($+$) and left ($-$)
respectively.  
 
 Note that we are allowing $f$ to depend on the direction of movement as well as $\nu$ and $S'$, the derivative of $S$ with respect to space. In our example, $f^+=f^-$ only depends on $y$ and $S$, but we can consider the more general dependence in these preliminary derivations. 
 
We consider the following equation for the
tumbling rate: 
 \begin{equation}
 \label{response_general}
 \lambda(y, S, S')\;=g(y, S, S'),
\end{equation}
for some continuous function $g$.
 
Then, according to Equation (\ref{transport}), $p^{\pm}(x, y, t)$ satisfy the
following coupled first-order partial differential equations: 
\begin{align}
\frac{\partial p^+}{\partial t}\;+\;&\nu\frac{\partial p^+}{\partial x} \;+\; \frac{\partial }{\partial y} \left[f^{+}(y, \nu, S, S')\; p^+\right]\label{transport_one_dim_general:1}
\;=\; g(y, S, S') (-p^++p^-)\\
\frac{\partial p^-}{\partial t}\;-\;&\nu\frac{\partial p^-}{\partial x} \;+\; \frac{\partial }{\partial y} \left[f^{-}(y, \nu, S, S')\; p^-\right]\label{transport_one_dim_general:2}
\;=\; g(y, S, S') (p^+-p^-).
\end{align}
We now state the following lemma from \cite{Erban_Othmer_2004} regarding the
existence and uniqueness of solutions of
(\ref{transport_one_dim_general:1})-(\ref{transport_one_dim_general:2}):

\begin{lemma}[Existence and uniqueness of solutions]
 Suppose that $f^{\pm}\in C^1(\r\times\r\times\r\times\r)$, and let $S\colon\r\times[0,\infty)\to\r$ be continuous. In addition, assume that $g$ in (\ref{response_general}) is always nonnegative, and $p^{\pm}_0\colon\r\to[0, \infty)$ are given nonnegative compactly supported $C^1$ functions. Then there exists a domain $\mathcal{Q}\subset\r\times[0, \infty)$ containing the entire line $t=0$ such that the system of equations (\ref{transport_one_dim_general:1})-(\ref{transport_one_dim_general:2}) with initial conditions $p_0^{\pm}$ has a unique $C^1$ solution in $\mathcal{Q}$. Moreover, the functions $p^{\pm}$ are nonnegative wherever they are defined. 
 \end{lemma}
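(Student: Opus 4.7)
My plan is to reduce the coupled PDE system to a pair of Volterra-type integral equations via the method of characteristics, solve these by a Banach fixed-point argument in a weighted sup-norm, and then deduce nonnegativity by a bootstrap through the Picard iteration.

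\textbf{Step 1 (characteristics).} First I would expand $\partial_y(f^\pm p^\pm)=f^\pm\partial_y p^\pm+(\partial_y f^\pm)p^\pm$ so that the left-hand side of each equation becomes a transport operator. The characteristics for the $+$ equation are $(X^+(t),Y^+(t))$ solving
\begin{equation*}
\dot X^+=\nu,\qquad \dot Y^+=f^+\!\bigl(Y^+,\nu,S(X^+,t),S'(X^+,t)\bigr),
\end{equation*}
with analogous equations (with $\dot X^-=-\nu$) for the $-$ family. Since $f^\pm\in C^1$ and $S$ is continuous, standard ODE theory gives unique local $C^1$ flows depending $C^1$ on the initial data. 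Along these curves the system becomes the linear scalar ODE
\begin{equation*}
\frac{d}{dt}P^\pm(t)=-\bigl[g+\partial_y f^\pm\bigr]P^\pm(t)+g\,Q^\mp(t),
\end{equation*}
where $P^\pm(t)=p^\pm(X^\pm(t),Y^\pm(t),t)$ and $Q^\mp(t)=p^\mp(X^\pm(t),Y^\pm(t),t)$.

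\textbf{Step 2 (domain).} Because $p_0^\pm$ is compactly supported and $f^\pm,g$ are $C^1$ and locally bounded, I would fix an arbitrary compact strip around $\{t=0\}$ and use continuity together with a Gronwall estimate on the characteristic flow to find $T>0$ such that all characteristics issuing from this strip remain inside a fixed compact set for $t\in[0,T]$. The union of such images, as the strip varies over a covering of $\r$, is the required domain $\mathcal{Q}$ containing the whole initial line.

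\textbf{Step 3 (fixed point).} Inverting the flow, the ODE above becomes the integral equation
\begin{equation*}
P^\pm(t)=P^\pm(0)\,e^{-\int_0^t(g+\partial_y f^\pm)\,d\tau}+\int_0^t e^{-\int_s^t(g+\partial_y f^\pm)\,d\tau}\,g\,Q^\mp(s)\,ds.
\end{equation*}
This defines a map $\Phi\colon(p^+,p^-)\mapsto(\tilde p^+,\tilde p^-)$ on $C^0(\mathcal{Q};\r^2)$. Endowing this space with the Bielecki-type norm $\|u\|_\beta=\sup_{\mathcal{Q}}e^{-\beta t}|u|$, and using the uniform bound on $g$ and $\partial_y f^\pm$ over the compact characteristic image, a direct estimate shows $\Phi$ is a contraction for $\beta$ sufficiently large. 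The unique fixed point is the desired solution; differentiating the integral equation and invoking $C^1$ dependence of the characteristics on data gives $C^1$ regularity of $p^\pm$ on $\mathcal{Q}$.

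\textbf{Step 4 (nonnegativity and main obstacle).} Starting the Picard iteration from the nonnegative initial data, each iterate is a sum of three manifestly nonnegative terms in the displayed integral equation (since $g\geq 0$, the exponential is positive, and the previous iterate is nonnegative by induction). Uniform convergence then yields nonnegativity of the limit. The main obstacle is structural rather than technical: $p^+$ and $p^-$ propagate along \emph{different} characteristic families, so no single Lagrangian change of variables closes the system. The fixed-point framework sidesteps this by decoupling the equations at each iteration, while the weighted norm absorbs the growth introduced by the cross-coupling term $g\,Q^\mp$.
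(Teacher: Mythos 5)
The paper does not actually prove this lemma: it is quoted verbatim from \cite{Erban_Othmer_2004} and used as a black box, so there is no in-paper argument to compare yours against. Your characteristics-plus-contraction strategy is the standard (and essentially the cited) proof of such statements, and the overall architecture --- expand $\partial_y(f^\pm p^\pm)$, integrate along the two characteristic families, convert to a coupled Volterra system, contract in a Bielecki norm, and propagate nonnegativity through the Picard iterates --- is sound. Two points deserve more care than you give them. First, your Step 1 claims $C^1$ dependence of the flows on initial data from ``$f^\pm\in C^1$ and $S$ continuous,'' but the vector field of the $Y$-characteristic is $f^\pm\bigl(Y,\nu,S(X,t),S'(X,t)\bigr)$: its regularity in $(X,t)$, and hence the $C^1$ regularity of $p^\pm$ you need at the end of Step 3, requires $S'$ to exist and be (at least) $C^1$ in $x$, which mere continuity of $S$ does not supply. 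This is really a defect of the hypotheses as transcribed in the paper (the original reference assumes more smoothness of $S$), but your proof should state the regularity it actually uses. Second, in Step 2 you invoke ``a Gronwall estimate'' to confine the characteristics to a compact set, but Gronwall needs an at most linear growth bound on $f^\pm$ in $y$, which $C^1$ alone does not give; the $y$-characteristics can blow up in finite time (e.g.\ $f^+=y^2$). The correct fix is to use the compact support of $p_0^\pm$: only the finitely many characteristics emanating from a compact set in $(x,y)$ matter, the local flow exists for a uniform positive time on that set, and the solution is extended by zero off the forward image of the support --- at which point one must still check $C^1$ matching across the boundary of that image. A trivial slip: your displayed integral equation has two summands, not three, but the nonnegativity argument is unaffected.
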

 
 The objective is to derive an equation for the macroscopic density function 
 \begin{equation}
 \label{density}
 n(x,t)\;=\;\displaystyle\int_{\r} p^+(x, y, t)+p^-(x, y, t)\;dy,
 \end{equation}
 using the microscopic model (\ref{transport_one_dim_general:1})-(\ref{transport_one_dim_general:2}), by the following technique from \cite{Erban_Othmer_2004}.
 To this end we define the following additional moments: 
 \begin{equation}
 \label{moments}
\aligned
 n_i(x,t)\;&=\;\displaystyle\int_{\r}y^i\left( p^+(x, y, t)+p^-(x, y, t)\right)\;dy,\quad\mbox{for}\quad i= 1, 2, \ldots\\
 j(x,t)\;&=\;\displaystyle\int_{\r} \nu\left(p^+(x, y, t)-p^-(x, y, t)\right)\;dy,\\
 j_i(x,t)\;&=\;\displaystyle\int_{\r}y^i \nu\left(p^+(x, y, t)-p^-(x, y, t)\right)\;dy,\quad\mbox{for}\quad i= 1, 2, \ldots.
\endaligned
\end{equation}
Note that by condition (\ref{condition_p+-}) all the moments are well defined.

Next, we assume 
\begin{equation}
\aligned
f^+\;&=\;f_0+\nu f_1,\\
f^-\;&=\;f_0-\nu f_1,
\endaligned
\end{equation}
where the Taylor expansions of $f_0$ and $f_1$ are given as follows:
\begin{align}
f_0\;=\;A_0+A_1 y+A_2 y^2+\ldots,\label{taylor:f0}\\
f_1\;=\;B_0+B_1 y+B_2 y^2+\ldots,\label{taylor:f1}
\end{align}
for some $A_i$'s and $B_i$'s that are functions of $S$, $S'$, and $\nu^2$.

Also we consider the following Taylor expansion for $g(y, S, S')$:
\begin{equation}
\label{taylor:g}
g(y, S, S')= a_0+a_1y+a_2 y^2+\ldots,
\end{equation} 
where $a_i$'s are functions of $S$, $S'$. 

In addition, we assume $A_0=0$. Then by multiplying
(\ref{transport_one_dim_general:1}) and (\ref{transport_one_dim_general:2}) by
$1$, $\nu$, and/or $y$, adding or subtracting, and integrating with respect to
$y$ on $\r$, and applying the fundamental theorem of calculus and integration
by parts, we obtain the following equations for macroscopic density and flux
and their first moments:
 \begin{align}
\frac{\partial n}{\partial t}+\frac{\partial j}{\partial x}\;&=\;0, \label{moment_eq1}\\
 \frac{\partial j}{\partial t}+\nu^2\frac{\partial n}{\partial x}\;&=\; -2 a_0 j - 2 a_1 j_1-2 \sum_{k\geq2} a_k j_k,  \label{moment_eq2}\\
  \frac{\partial n_1}{\partial t}+\frac{\partial j_1}{\partial x}\;&=\; B_0 j+ A_1 n_1+B_1 j_1 +\sum_{k\geq2}A_{k} n_{k}+\sum_{k\geq2}B_{k}j_k,  \label{moment_eq3}
\\
 \frac{\partial j_1}{\partial t} + \nu^2 \frac{\partial n_1}{\partial x}\;&=\;\nu^2 B_0 n+\nu^2 B_1 n_1+( A_1-2a_0) j_1  \label{moment_eq4}
\\
& \qquad\qquad+\nu^2\sum_{k\geq2}B_{k} n_{k}+\sum_{k\geq2}(A_{k}-2a_{k-1})j_k\nonumber
 \end{align}
Note that by condition (\ref{condition_p+-}), for any $i=0, 1,
\ldots$ we have that:
\[
\displaystyle\lim_{y\to\pm\infty} y^i (p^+\pm p^-)\;=\;0.
\] 

\subsection{Parabolic scaling}
In this section, we introduce a parabolic scaling to derive a
chemotaxis equation from the moment equations
(\ref{moment_eq1})-(\ref{moment_eq4}).  
Let $L$, $T$, $\nu_0$, and $N_0$ be scale factors for the length, time,
velocity, and particle density respectively, and define the following
dimensionless parameters:
\begin{align}
\quad \hat{\nu}\;&=\;\frac{\nu}{\nu_0},\label{parabolic_scaling:1}\\
 \hat{n}\;&=\;\frac{n}{N_0}, \quad \hat{j}\;=\;\frac{j}{N_0\nu_0}, \quad
 \hat{n}_i\;=\;\frac{n_i}{N_0}, \quad \hat{j}_i\;=\;\frac{j_i}{N_0\nu_0}, \quad\mbox{for}\quad i= 1, 2, \ldots\label{parabolic_scaling:2}\\
 \hat{a}_i\;&=\;{a_i}T,\quad \hat{A}_i\;=\;{A_i}T,\quad \hat{B}_i\;=\;{B_i}L,\quad\mbox{for}\quad i= 0, 1, \ldots\label{parabolic_scaling:3}
\end{align}

The parabolic scales of space and time are given by:
\begin{align}\label{time_scaling}
  \hat{x}\;=\;\displaystyle\left(\frac{\epsilon L}{{\nu_0T}}\right)\frac{x}{L},\quad\hat{t}\;=\;\epsilon^2\frac{t}{T},
  \end{align}
  for any arbitrary $\epsilon$.
  
Now assume that under some conditions, for any $i\geq2$, the $j_i$'s and $n_i$'s
are much smaller than $j_1$ and $n_1$ and can be neglected. (For example see Lemma \ref{order_higher_of_moments} below.) 
Therefore, the dimensionless form of moment equations
(\ref{moment_eq1})-(\ref{moment_eq4}), become:
 \begin{align}
\epsilon^2\frac{\partial \hat{n}}{\partial \hat{t}}+\epsilon\frac{\partial \hat{j}}{\partial \hat{x}}\;&=\;0, \label{moment_dimensionless_eq1}\\
\epsilon^2 \frac{\partial\hat{j}}{\partial \hat{t}}+\epsilon\hat{\nu}^2\frac{\partial \hat{n}}{\partial \hat{x}}\;&=\; -2 \hat{a}_0\hat{j} - 2 \hat{a}_1 \hat{j}_1,  \label{moment_dimensionless_eq2}\\
 \epsilon^2 \frac{\partial \hat{n}_1}{\partial \hat{t}}+\epsilon\frac{\partial \hat{j}_1}{\partial \hat{x}}\;&=\; \epsilon \hat{B}_0 \hat{j}+ \hat{A}_1 \hat{n}_1+\epsilon \hat{B}_1 \hat{j}_1,  \label{moment_dimensionless_eq3}\\
\epsilon^2 \frac{\partial \hat{j}_1}{\partial \hat{t}} +\epsilon \hat{\nu}^2 \frac{\partial \hat{n}_1}{\partial \hat{x}}\;&=\;\epsilon\hat{\nu}^2 \hat{B}_0 \hat{n}+\epsilon\hat{\nu}^2 \hat{B}_1 \hat{n}_1+( \hat{A}_1-2\hat{a}_0) \hat{j}_1  \label{moment_dimensionless_eq4}
 \end{align}
 
 Next, we write Equations (\ref{moment_dimensionless_eq1})-(\ref{moment_dimensionless_eq4}) in a matrix form, as follows:
 \begin{equation}\label{matrix_form}
 \epsilon^2\frac{\partial \hat{w}}{\partial \hat{t}}+\epsilon\frac{\partial}{\partial \hat{x}} P\hat{w}\;=\; \epsilon Q \hat{w}+R \hat{w},
 \end{equation}
where $\hat{w}\;=\;\left(\hat{n}, \hat{j}, \hat{n}_1, \hat{j}_1\right)^T$ and the matrices $P$, $Q$, and $R$ defined as follows:
\[P\;=\;\left(\begin{array}{cccc}0 & 1 & 0 & 0 \\\hat{\nu}^2 & 0 & 0 & 0 \\0 & 0 & 0 & 1\\0 & 0 & \hat{\nu}^2& 0\end{array}\right),\]
\[Q\;=\;\left(\begin{array}{cccc}0 & 0 & 0 & 0 \\0 & 0 & 0 & 0 \\0 & \hat{B}_0 & 0 & \hat{B}_1 \\\hat{\nu}^2 \hat{B}_0 & 0 & \hat{\nu}^2\hat{B}_1 & 0\end{array}\right), \qquad 
R\;=\;\left(\begin{array}{cccc}0 & 0 & 0 & 0 \\0 & -2\hat{a}_0 & 0 & -2\hat{a}_1 \\0 & 0 & \hat{A}_1 & 0 \\0 & 0 & 0 &\hat{A}_1-2\hat{a}_0\end{array}\right).\] 

Assuming the regular perturbation expansion for $w$, 
\[\hat{w}\;=\;\hat{w}^0+\epsilon \hat{w}^1+\epsilon^2 \hat{w}^2+\ldots,\quad\mbox{where}\quad \hat{w}^i\;=\;\left(\hat{n}^i,\hat{ j}^i, \hat{n}_1^i, \hat{j}_1^i\right)^T,\]
and comparing the terms of equal order in $\epsilon$ in (\ref{matrix_form}), we get:
\begin{align}
\epsilon^0&:\quad R \hat{w}^0\;=\;0\quad\Rightarrow\quad \hat{w}^0\;=\;(\hat{n}^0, 0, 0, 0)^T\\
\epsilon^1&:\quad R \hat{w}^1+Q\hat{w}^0\;=\;\frac{\partial}{\partial \hat{x}} P\hat{w}^0\nonumber\\
&\Rightarrow\quad\left(\begin{array}{c}0 \\-2\hat{a}_0\hat{j}^1-2\hat{a}_1\hat{j}_1^1 \\\hat{A}_1\hat{n}_1^1 \\(\hat{A}_1-2\hat{a}_0)\hat{j}_1^1+\hat{\nu}^2\hat{B}_0\hat{n}^0\end{array}\right)\;=\;\left(\begin{array}{c}0 \\\hat{\nu}^2\frac{\partial}{\partial \hat{x}}\hat{n}^0 \\0 \\0\end{array}\right)\label{epsilon^1}
\end{align}
From Equation (\ref{epsilon^1}), we can derive the following equation for $\hat{j}_1^1$:
\[\hat{j}_1^1\;=\;-\frac{\hat{\nu}^2\hat{B}_0}{\hat{A}_1-2\hat{a}_0}\hat{n}^0,\]
and therefore, using the same Equation, we obtain the following equation for
$\hat{n}^0$ and $\hat{j}^1$:
\begin{equation}\label{eq:1}
-2\hat{a}_0\hat{j}^1+2\hat{a}_1\frac{\hat{\nu}^2\hat{B}_0}{\hat{A}_1-2\hat{a}_0}\hat{n}^0\;=\;\hat{\nu}^2\frac{\partial}{\partial \hat{x}}\hat{n}^0
\end{equation}
Now we compare the terms with order $\epsilon^2$:
\begin{equation}
\epsilon^2:\quad R \hat{w}^2\;=\;-Q\hat{w}^1+\frac{\partial}{\partial \hat{x}} P\hat{w}^1+\frac{\partial}{\partial \hat{t}} \hat{w}^0\label{epsilon^2}.\\
\end{equation}
Note that $(1, 0, 0, 0)^T$ is in the kernel of $R$ and the right hand side of (\ref{epsilon^2}) is in the image of $R$. Therefore their inner product is zero:
\begin{equation}
\frac{\partial}{\partial \hat{x}} \hat{j}^1+\frac{\partial}{\partial \hat{t}} \hat{n}^0\;=\;0\label{eq:2}.\\
\end{equation}
Equation (\ref{eq:1}) together with Equation (\ref{eq:2}) give the following
 equation for $n^0$ in the dimensionless variables: 
\begin{equation}\label{n0_equation_dimensionless}
\frac{\partial \hat{n}^0}{\partial \hat{t}}\;=\; \frac{\hat{\nu}^2}{2\hat{a}_0}\frac{\partial^2\hat{n}^0}{\partial \hat{x}^2}-\frac{\hat{a}_1\hat{B}_0\hat{\nu}^2}{\hat{a}_0(\hat{A}_1-2\hat{a}_0)}\frac{\partial \hat{n}^0}{\partial \hat{x}}.
\end{equation}
Since $n(x,t)=n^0(x,t)+\mathcal{O}(\epsilon)$, if we neglect the $\mathcal{O}(\epsilon)$ term, Equation (\ref{n0_equation_dimensionless}) leads to the following chemotaxis equation in dimensionless variables: 
\begin{equation}\label{main_equation_dimensionless}
\frac{\partial \hat{n}}{\partial \hat{t}}\;=\; \frac{\hat{\nu}^2}{2\hat{a}_0}\frac{\partial^2\hat{n}}{\partial \hat{x}^2}-\frac{\hat{a}_1\hat{B}_0\hat{\nu}^2}{\hat{a}_0(\hat{A}_1-2\hat{a}_0)}\frac{\partial \hat{n}}{\partial \hat{x}}.
\end{equation}
 Changing back to the original (dimensional) variables, we obtain the following PDE:
\begin{equation}\label{main_equation}
\frac{\partial n}{\partial t}\;=\; \frac{\nu^2}{2a_0}\frac{\partial^2n}{\partial x^2}-\frac{a_1B_0\nu^2}{a_0(A_1-2a_0)}\frac{\partial n}{\partial x}.
\end{equation}

\section{Example}

In this example, we assume the internal state evolves according to the
following ODE system:

\begin{equation}
\label{deterministic_model_example}
\frac{dy}{dt}\;=\;py (q- a),
\end{equation}
 where $a=\displaystyle\frac{1}{1+K(\frac{S}{y})^N}$, and $p$, $q$, $K$, and $N$ are positive constants.   
 
 This system provides a simple model of chemotactic behavior in \emph{E.\ coli}
bacteria, as discussed below in the section on simulations.

 By ignoring the tumbling time, we consider the following equation for the tumbling rate:
 \begin{equation}
 \label{response}
 \lambda(y)\;=\;r a^H,
\end{equation}
where $r$ and $H$ are positive constants. 

The objective is to derive a parabolic equation for the macroscopic density function.

 It is convenient to define a new internal state variable as follows:
 \begin{equation}
 \label{new_var}
 z\;=\;p(a-q).
\end{equation}
 
A simple calculation shows that  
\begin{align}
\frac{dz}{dt}\;&=\;\frac{N}{p}z(z+pq)(z+pq-p)\pm\nu \frac{N}{p}\displaystyle\frac{S'}{S}(z+pq)(z+pq-p)\label{new_system:1}\\
\lambda(z)\;&=\; \frac{r}{p^H}(z+pq)^H\label{new_system:2}
\end{align}

\begin{lemma}[shallow condition]
\label{shallow_condition_example}
Let $c=\min\{pq, p-pq\}$. If
\[\left|{\displaystyle\frac{S'}{S}}\right|\leq \frac{c}{\nu} \quad\mbox{and}\quad \abs{z(0)}\leq c,\]
 then $\abs{z(t)}\leq c$ for all $t\geq0$.
\end{lemma}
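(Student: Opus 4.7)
The plan is to show that the interval $[-c, c]$ is forward-invariant for the ODE \eqref{new_system:1} by a standard barrier argument, checking that the vector field points inward (or is tangent) at the two endpoints $z = \pm c$.

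First I would rewrite the right-hand side of \eqref{new_system:1} in the factored form
\[
\frac{dz}{dt}\;=\;\frac{N}{p}(z+pq)(z+pq-p)\left[z \pm \nu\,\frac{S'}{S}\right],
\]
so that the $z$-dependence separates cleanly from the external signal factor. The first two factors $(z+pq)$ and $(z+pq-p)$ encode the internal dynamics, while the bracket carries the ``forcing.''

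Next I would analyze the signs on $[-c,c]$. Since $c \le pq$, we have $z+pq \ge pq - c \ge 0$ for all $|z|\le c$; since $c \le p - pq$, we have $z + pq - p \le c + pq - p \le 0$. Hence the product $(z+pq)(z+pq-p)$ is nonpositive on $[-c,c]$, and in particular on the boundary. The shallow-gradient assumption $|S'/S| \le c/\nu$ is exactly what is needed to control the bracket at the endpoints: at $z=c$ the bracket equals $c \pm \nu S'/S \ge c - |\nu S'/S| \ge 0$, while at $z=-c$ it equals $-c \pm \nu S'/S \le -c + |\nu S'/S| \le 0$. Multiplying the signs, $dz/dt \le 0$ at $z=c$ and $dz/dt \ge 0$ at $z=-c$ — for either choice of the $\pm$ (so the bound holds both for right-movers and left-movers).

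From here the conclusion is standard: since the vector field is $C^1$ and points into (or is tangent to) the interval $[-c,c]$ at both endpoints, any trajectory starting in this interval cannot leave it. One can make this rigorous by a straightforward contradiction argument — let $t^* = \inf\{t>0 : |z(t)| > c\}$ and derive a contradiction from the sign of $dz/dt$ at $z(t^*) = \pm c$ — or by invoking Nagumo's invariance theorem directly. I do not expect any real obstacle; the only delicate point is to verify that the two hypotheses $c \le pq$ and $c \le p-pq$ (built into the definition of $c$) give exactly the sign information required to dominate the external signal term, and that the inequality $|\nu S'/S| \le c$ is sharp enough to handle the worst-case choice of $\pm$.
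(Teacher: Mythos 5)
Your proof is correct and follows essentially the same route as the paper's: factor the right-hand side of \eqref{new_system:1}, check the signs of $(z+pq)$, $(z+pq-p)$, and the forcing bracket at $z=\pm c$ under the bound $|S'/S|\le c/\nu$, and conclude that $[-c,c]$ is forward-invariant. The only difference is that you are more explicit about the final invariance step (Nagumo / infimum-time argument), which the paper leaves implicit.
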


\begin{proof}
At $z(0)=c$, since $\left|{\displaystyle\frac{S'}{S}}\right|\leq \displaystyle\frac{c}{\nu}$, $z\pm\nu\displaystyle\frac{S'}{S}\geq0$. By the definition of $c$, $z+pq-p\leq0$, while $z+pq\geq0$. Therefore at $z=c$, $\displaystyle\frac{dz}{dt}\leq0$. On the other hand, at $z(0)=-c$, $z\pm\nu\displaystyle\frac{S'}{S}\leq0$, $z+pq-p\leq0$, and $z+pq\geq0$. Therefore at $z=-c$, $\displaystyle\frac{dz}{dt}\geq0$. Hence for any $t\geq0$, $\abs{z(t)}\leq c$.
\end{proof}

We'll show that under the shallow condition, Lemma \ref{shallow_condition_example}, and the following parabolic dimensionless parameters, the higher macroscopic moments $j_2, j_3, \ldots$, and $n_2, n_3, \ldots$ can be ignored. 

\begin{lemma}\label{order_higher_of_moments}
As before, let $L$, $T$, $\nu_0$, and $N_0$ be scale factors for the length, time, velocity, and particle density respectively, and define the following dimensionless quantities: 
\[\widehat{\left(\frac{S'}{S}\right)}\;=\;\frac{\nu_0}{\epsilon}\frac{S'}{S}, \quad \hat{N}\;=\; TN,\quad \hat{p}\;=\; p\quad \hat{q}\;=\; q \quad\mbox{and}\quad \hat{r}\;=\; Tr.\]
All other parameters remain the same as in Equations (\ref{parabolic_scaling:1})-(\ref{parabolic_scaling:2}), and Equation (\ref{time_scaling}). 
Then under the condition of Lemma \ref{shallow_condition_example}, for any $i\geq1$, 
\[\frac{\hat{j}_i}{\hat{n}}\;\leq\;\mathcal{C}_i\epsilon^i\quad\mbox{and}\quad \frac{\hat{n}_i}{\hat{n}}\;\leq\;\mathcal{D}_i\epsilon^i,\]
for some constants $\mathcal{C}_i\;=\;\mathcal{O}(1)$, and $\mathcal{D}_i\;=\;\mathcal{O}(1)$.
\end{lemma}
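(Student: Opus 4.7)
The plan is to refine Lemma~\ref{shallow_condition_example} so that its invariant region can be shrunk to width $O(\epsilon)$, use this to confine the $z$-support of $p^\pm$, and then read off the moment bounds by direct integration.

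First, I would observe that the sign analysis in the proof of Lemma~\ref{shallow_condition_example} goes through for any $c'\in (0,c]$: at $z=\pm c'$ the factor $(z+pq)(z+pq-p)$ is still nonpositive (since $c'+pq>0$ and $c'+pq-p\leq 0$ whenever $c'\leq c\leq p-pq$), so both $\dot z_+$ and $\dot z_-$ point inward as soon as $|\nu (S'/S)|\leq c'$. In the parabolic scaling, $\nu(S'/S)=\epsilon\,\hat\nu\,\widehat{(S'/S)}$, so choosing $M=\sup_{x,t}|\hat\nu\,\widehat{(S'/S)}|$ (an $O(1)$ quantity in the scaling regime) and $c'=M\epsilon$ automatically satisfies $|\nu(S'/S)|\leq c'$, and $c'\leq c$ for all sufficiently small $\epsilon$.

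Next, under the natural assumption that the initial distribution is concentrated in $|z|\leq M\epsilon$ (compatible with a near-equilibrium initial condition of the internal dynamics), I would argue that the transport equations (\ref{transport_one_dim_general:1})--(\ref{transport_one_dim_general:2}) propagate this $z$-support. Along the characteristics $\dot x=\pm\nu$, $\dot z=\dot z_\pm$, the internal state stays in $[-M\epsilon, M\epsilon]$ by the previous step, and the tumbling source $g(z,S,S')(p^\mp - p^\pm)$ couples $p^+$ and $p^-$ pointwise in $(x,z,t)$ without shifting $z$. Hence $p^\pm(x,z,t)=0$ whenever $|z|>M\epsilon$, and the moment bounds follow by direct integration: $n_i\leq (M\epsilon)^i n$ and $|j_i|\leq \nu (M\epsilon)^i n$. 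After nondimensionalizing via $\hat n_i = n_i/N_0$ and $\hat j_i = j_i/(N_0\nu_0)$, we obtain $\hat n_i/\hat n\leq M^i \epsilon^i$ and $\hat j_i/\hat n\leq M^i\hat\nu\,\epsilon^i$, so we may take $\mathcal{D}_i=M^i$ and $\mathcal{C}_i=M^i\hat\nu$, both $O(1)$.

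The main obstacle is the support-propagation step: one must verify that the hyperbolic system with its tumbling source term really does preserve the narrow $z$-slab $|z|\leq M\epsilon$. This is a straightforward application of the method of characteristics to the coupled $(p^+,p^-)$ system -- the source acts at fixed $z$ and merely redistributes mass between the two directions -- but it should be stated explicitly, since the coupling might otherwise appear to broaden the support.
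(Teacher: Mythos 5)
Your proof is correct and takes essentially the same route as the paper's: under the parabolic scaling the shallow condition confines $z$ to an interval of width $O(\epsilon)$ (the paper writes $\abs{z}\leq \epsilon\hat{\nu}\bar{K}$ with $K=\frac{\epsilon}{\nu_0}\bar{K}$), and the moment bounds then follow by pulling $\abs{z}^i\leq(\epsilon\hat{\nu}\bar{K})^i$ out of the integrals, giving $\mathcal{C}_i=\bar{K}^i\hat{\nu}^{i+1}$ and $\mathcal{D}_i=\mathcal{C}_i/\hat{\nu}$, which coincide with your $M^i\hat{\nu}$ and $M^i$. You are simply more explicit than the paper about two steps it leaves implicit --- that the invariant interval of Lemma \ref{shallow_condition_example} can be shrunk from $c$ to $c'=O(\epsilon)$, and that the transport equations propagate the $O(\epsilon)$ support in $z$ --- both of which are indeed needed.
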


\begin{proof}
Note that $\left|{\displaystyle\frac{S'}{S}}\right|\;\leq\; K$ implies 
$\left|\widehat{\left(\displaystyle\frac{S'}{S}\right)}\right|\;\leq \;\displaystyle\frac{\nu_0}{\epsilon}K$. Hence $K\;=\;\displaystyle\frac{\epsilon}{\nu_0}\bar{K}$, where $\bar{K}\;=\;\mathcal{O}(1)$. Now by Lemma \ref{shallow_condition_example}, we have $\abs{z}\;\leq\;\displaystyle\nu\frac{\epsilon}{\nu_0}\bar{K}\;=\;\epsilon\hat{\nu}\bar{K}$.
\[
\begin{array}{lcl}
\hat{j}_i &=& \displaystyle\frac{j_i}{\nu_0 N_0} \\
&=&\displaystyle\frac{\nu}{\nu_0 N_0}   \int_{\r} z^i(p^+-p^-) \;dz\\
&\leq&\displaystyle\frac{\hat\nu}{N_0}   \int_{\r} z^i(p^++p^-) \;dz\quad(=\hat\nu \hat{n}_i)\\
&\leq&\displaystyle\frac{\hat{\nu}}{ N_0}(\epsilon\bar K\hat{\nu})^i \int_{\r}  (p^++p^-) \;dz\\
&=&\epsilon^i\mathcal{C}_i\hat{n},
\end{array}
\]
where $\mathcal{C}_i\;=\;{\bar{K}}^i\hat{\nu}^{i+1}\;=\;\mathcal{O}(1)$. Note that $\displaystyle\frac{\hat{n}_i}{\hat{n}}\;\leq\;\mathcal{D}_i$, where 
$\mathcal{D}_i\;=\;\displaystyle\frac{\mathcal{C}_i}{\hat\nu}.$
\end{proof}

Using the notations of Equations (\ref{taylor:f0})-(\ref{taylor:f1}), 
\begin{align}
A_0\;=\;0, \quad A_1\;=\;Npq \left(q-1\right), \quad B_0\;&=\; N\frac{S'}{S}pq\left(q-1\right),
\end{align}
and by notation of Equation (\ref{taylor:g}), the first two coefficients of the Taylor expansion of $g$ is as follows: 
\begin{align}
a_0\;=\;rq^H, \quad a_1\;=\;\displaystyle\frac{rHq^H}{pq}.
\end{align}

Therefore, using the Equation (\ref{main_equation}), the chemotaxis equation for this particular example is:
\begin{equation}\label{main_equation_example}
\frac{\partial n}{\partial t}\;=\;\frac{\nu^2}{2r}\left(\frac{1}{q}\right)^H\frac{\partial^2 n}{\partial x^2}-\frac{NH\left(q-1\right)\nu^2}{Npq\left(q-1\right)-2rq^H}\frac{S'}{S}\frac{\partial n}{\partial x}.
\end{equation}

 \section{Simulations}

In this section, we provide agent-based simulations of the full jump-Markov
system, and comparisons with the parabolic model, for systems of the special
form in \eqref{deterministic_model_example}, which we repeat here for
convenience: 
\[
\frac{dy}{dt} \;=\; p  y  (q- a) 
\,,\quad\quad
a \;=\;  \frac{1}{1+K(\frac{S}{y})^N} \,.
\]
The jump (or ``tumbling'' for bacteria) rate has the form $\lambda (y)\;=\;r a^H$
in~\eqref{response}.
The parameters $p$, $q$, $K$, $N$, $r$, and $H$ are all positive.

For ligand concentrations $K_I$$\ll$$S$$\ll$$K_A$, where $K_I\approx 18.2\mu M$
and $K_A\approx3000\mu M$ are the dissociation constants for inactive and active
Tar receptors respectively, the above equations provide a simple but phenomenologically accurate
model%
\footnote{For convenience of analysis, we are using $y = e^{\alpha m}$ as a state
variable, instead of the methylation level ``$m$'' as done in other papers.}
of the chemotactic response of \emph{E.\ coli} bacteria to MeAsp; see for
example \cite{TuShimizuBerg2008}, \cite{Jiang_PLOS2010}.
Furthermore, this is the range in which~\cite{shoval10} predicted,
and~\cite{stocker_shimizu11} experimentally verified, scale-invariant behavior
for \emph{E.\ coli} responses to MeAsp.  (See also
\cite{shoval_alon_sontag_2011} for further theory of scale invariance.)
To stay in this range, we use ligand concentrations very close to, and mostly
larger than, $S = 100\mu M$. 
Since our objective is to understand the quality of the parabolic
(reaction-diffusion) equation, we depart slightly from models cited above, in
postulating an instantaneous re-orientation after tumbling.  A model with
tumbling would require additional analysis. Also, since motion is one- dimensional, we ignore rotational random drifts from linear movement. 

The parameters in previous studies, see for example~\cite{Jiang_PLOS2010}, are
as follows:%
\footnote{In terms of the parameters used in~\cite{Jiang_PLOS2010},
            $p=\alpha (k_R+k_B)$, where $\alpha =1.7$ and $k_R=k_B=0.005$.}
\[
\nu =0.0165, \;\;  
N=6, \;\;
q=0.5, \;\;
r = 1280, \;\;
H=10, \;\;
K = 0.000740, \;\;
p = 0.017
\]
(in appropriate units corresponding to $\mu M$ concentrations, times in
seconds, and lengths in millimeters).
We start with these, but we will vary $p$ in order to understand how
the speed of adaptation (i.e, the time-scale at which the state variable $y$
evolves) affects the quality of our theoretical predictions.

In our numerical experiments, we take a one-dimensional channel of length 10
(in units of millimeters), and start all agents (cells) in the middle
position, $x=5$, randomizing the initial direction of movement as right or
left with probability $1/2$.
The initial level $y(0)$ of every cell is picked such that the activity $a$
equals the adapted value $q$.  (This is in accordance with the pre-adaptation
setup in the microfluidics experiments in~\cite{stocker_shimizu11}.)

The length is picked large enough so that, in the time intervals considered
(up to $[0,200]$), no boundaries are reached, so that, for all practical
purposes, we are working on an infinite domain.
We always take 100,000 cells, and display histograms based on 100 equal-sized
bins.  (These numbers represent a heuristic compromise between computational
effort and smoothness of empirical densities.)

\newcommand{\slope}{\rho }

We only consider exponential gradients $S(x)=\kappa  e^{\slope x}$, in which case
the advection term does not depend on $x$, because $S'(x)/S(x)=\slope$ is
constant.  We call $\slope$ the ``slope'' (more precisely, this is the slope of
$\log S$), and the shallow-gradient condition amounts to requiring $\slope\ll1$.
We typically pick $\slope$ in the range 0.1 to 1.
Under the assumption that $S(x)$ is exponential, our solution
\eqref{main_equation_example} becomes a constant coefficient
advection-diffusion equation:
\[
\frac{\partial n}{\partial t}\;=\; 
D
\frac{\partial^2n}{\partial x^2}
-
V
\frac{\partial n}{\partial x}.
\]
The general solution of this equation, when starting from a Dirac delta
function at position $x_0(=5)$, has the form
\[
n(x,t) = {\cal N}(x+x_0 - Vt,2Dt)
\]
where ${\cal N}(\mu ,\sigma )$ is a Gaussian density with mean $\mu $ and variance
$\sigma $. 
In other words, the solution is a translate of the fundamental solution of the
heat equation.
For purposes of comparison, the densities displayed at any given time $t$
are plotted together with this theoretical prediction.
As remarked earlier, the constant $\kappa $ is picked so that $\kappa S(x_0)=100$,
that is, $\kappa  = 100 e^{-5\slope}$.

\subsection{Slope $\slope=0.1$, various values of $p$}

In Figures \ref{0.1.0.017} to \ref{0.1.1}, we display simulated and
theoretical distributions at times $t=10,100,200$ as well as a plot of the
means of the distribution on the interval $[0,200]$.  Agreement to theory is
very good, with larger $p$ (faster internal adaptation dynamics) leading to
closer fits.

\newcommand{\ns}{\null\vspace*{-10pt}}
\abovecaptionskip-20pt

\ns
\begin{figure}[ht]
\picc{0.2}{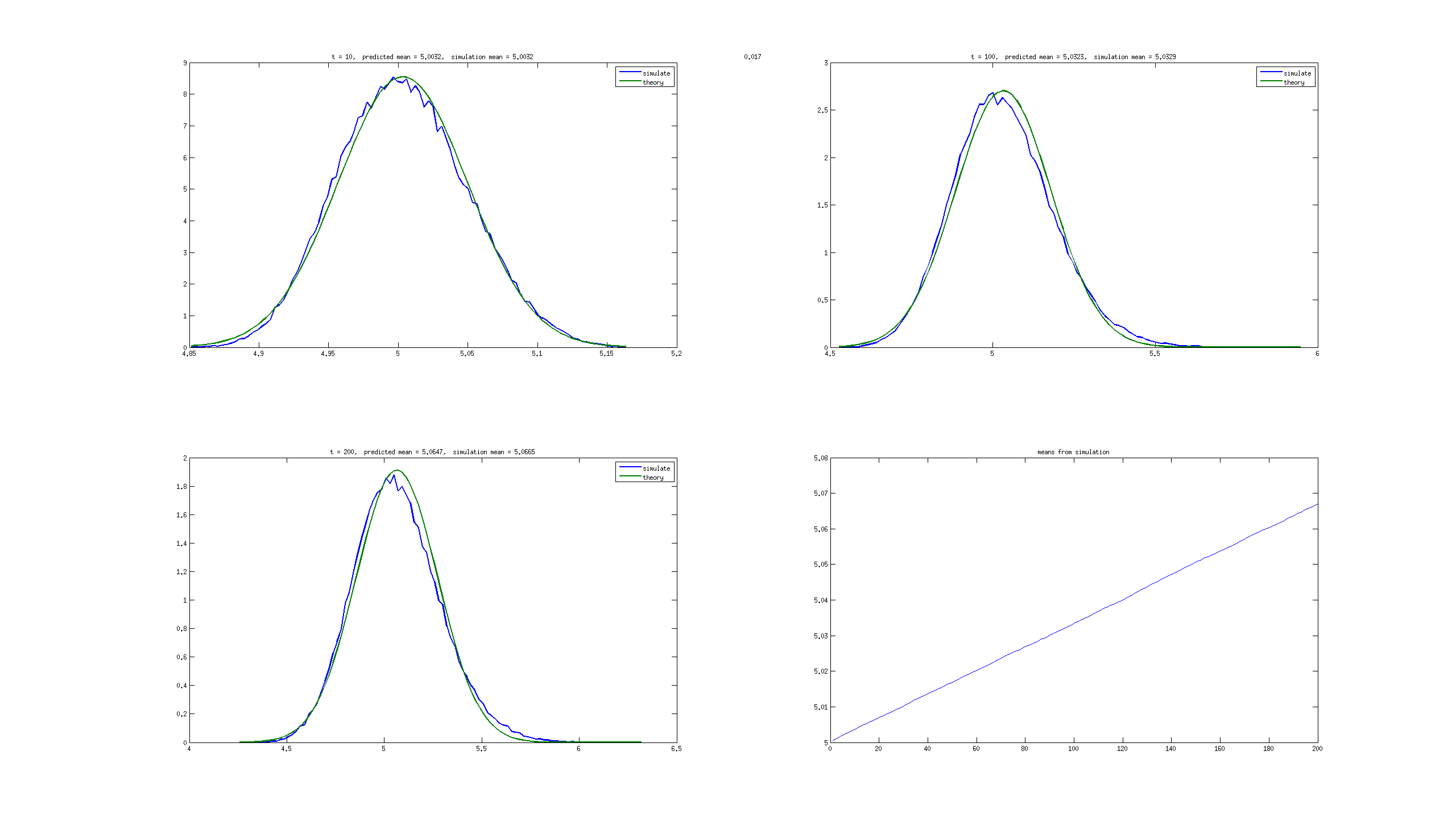}
\ns
\caption{$\slope=0.1,p=0.017$}
\label{0.1.0.017}
\end{figure}

\ns
\begin{figure}[ht]
\picc{0.2}{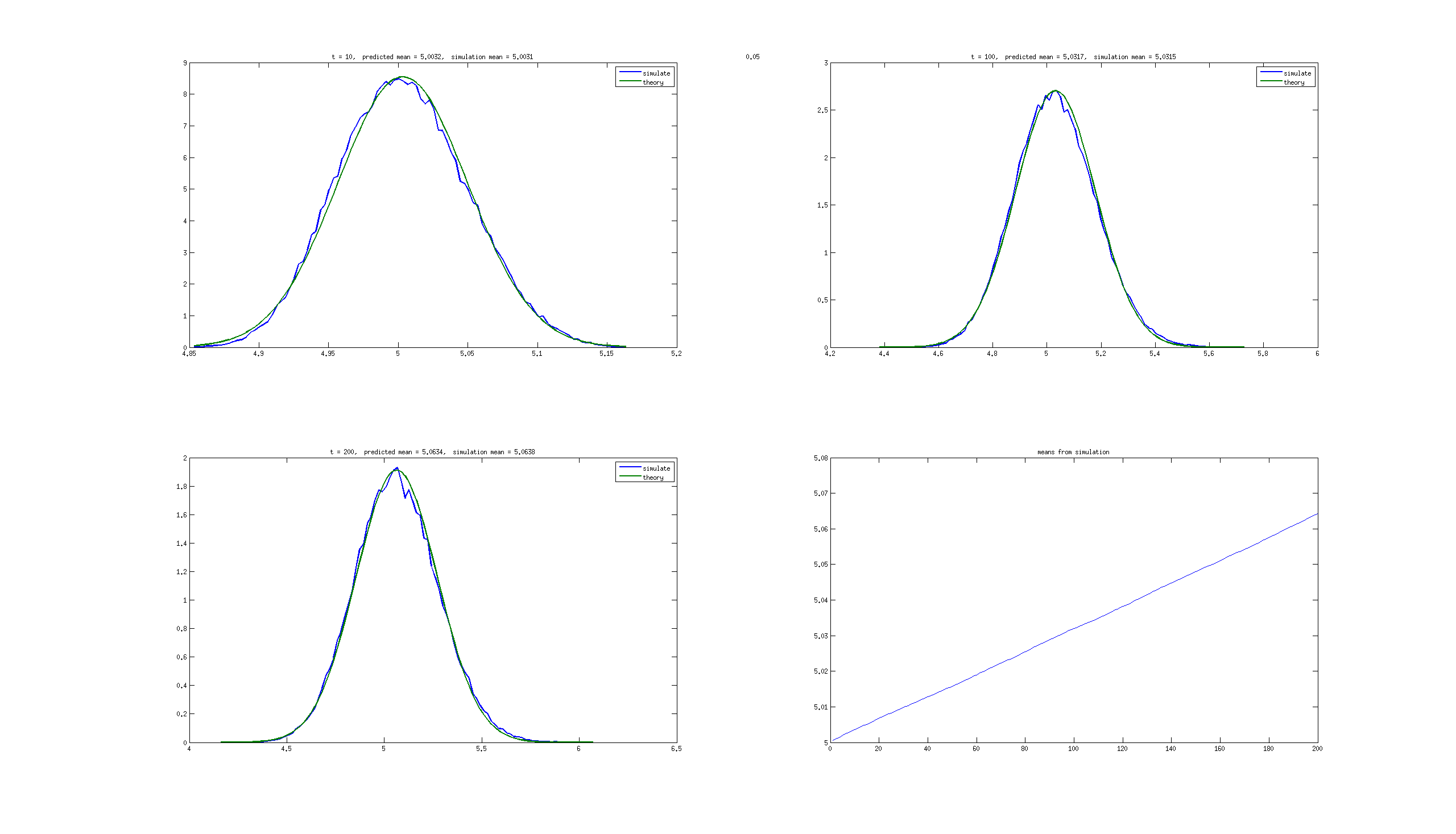}
\caption{$\slope=0.1,p=0.05$}
\end{figure}

\ns
\begin{figure}[ht]
\picc{0.2}{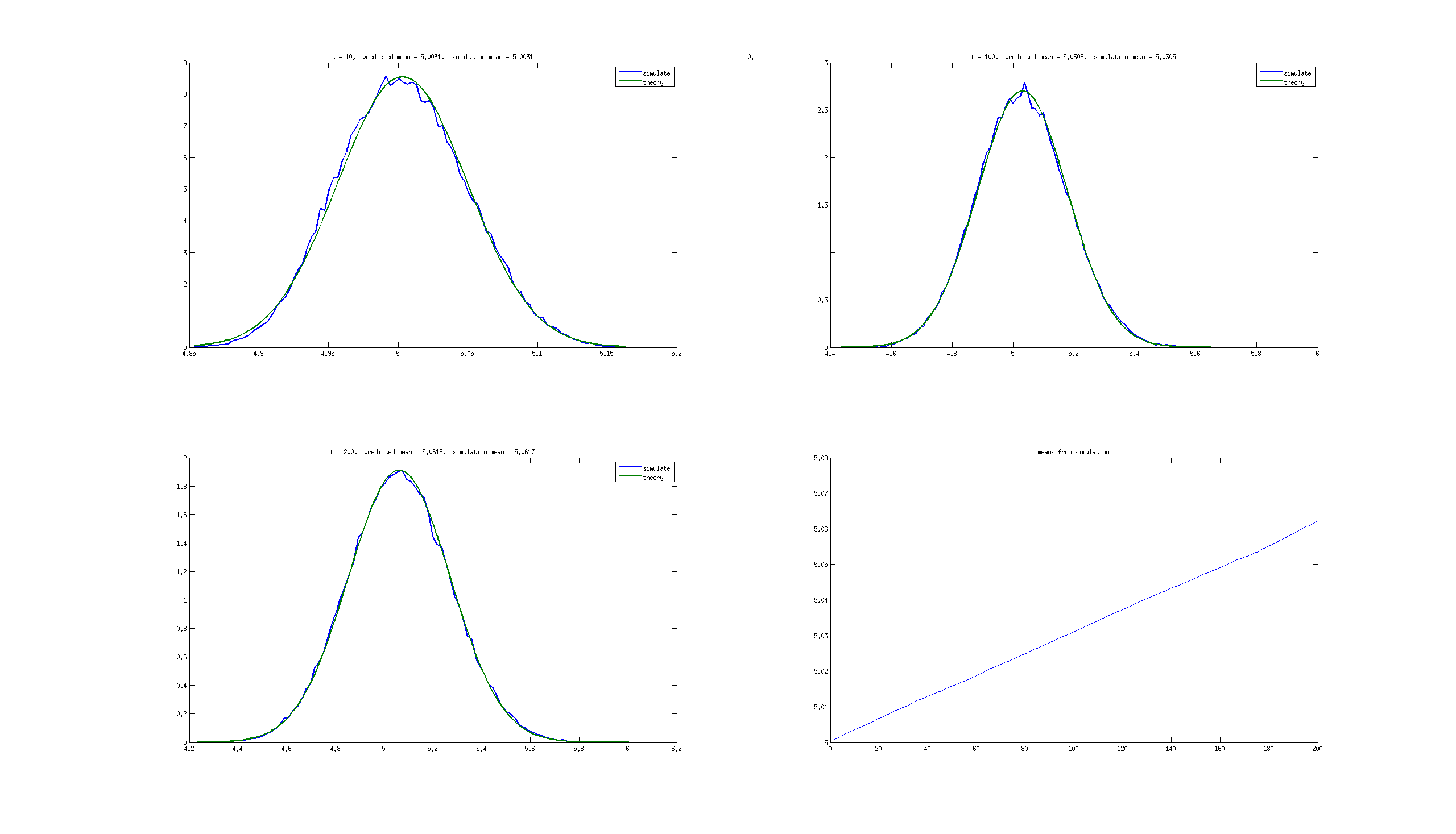}
\caption{$\slope=0.1,p=0.1$}
\end{figure}

\ns
\begin{figure}[ht]
\picc{0.2}{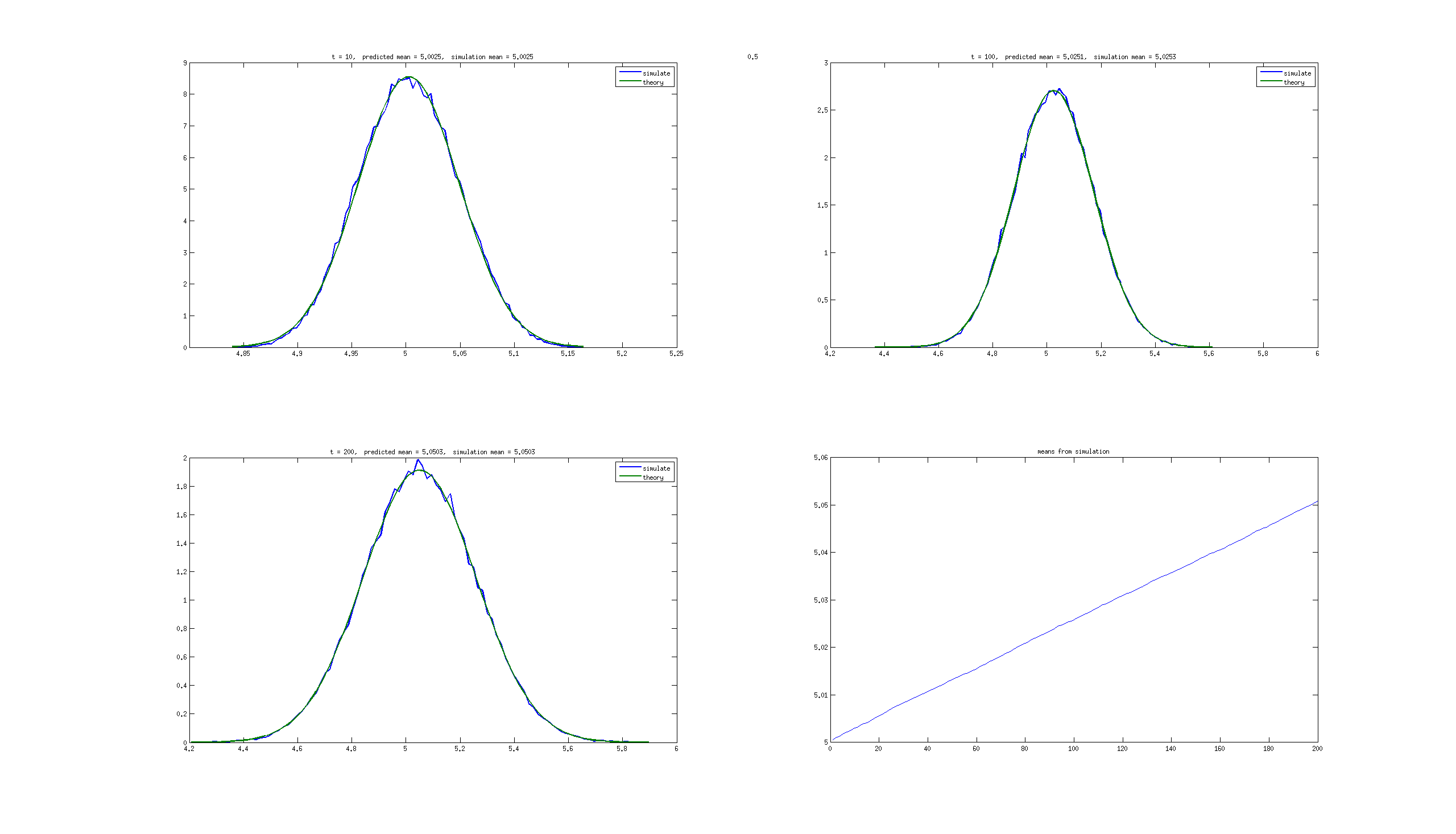}
\caption{$\slope=0.1,p=0.5$}
\end{figure}

\ns
\begin{figure}[ht]
\picc{0.2}{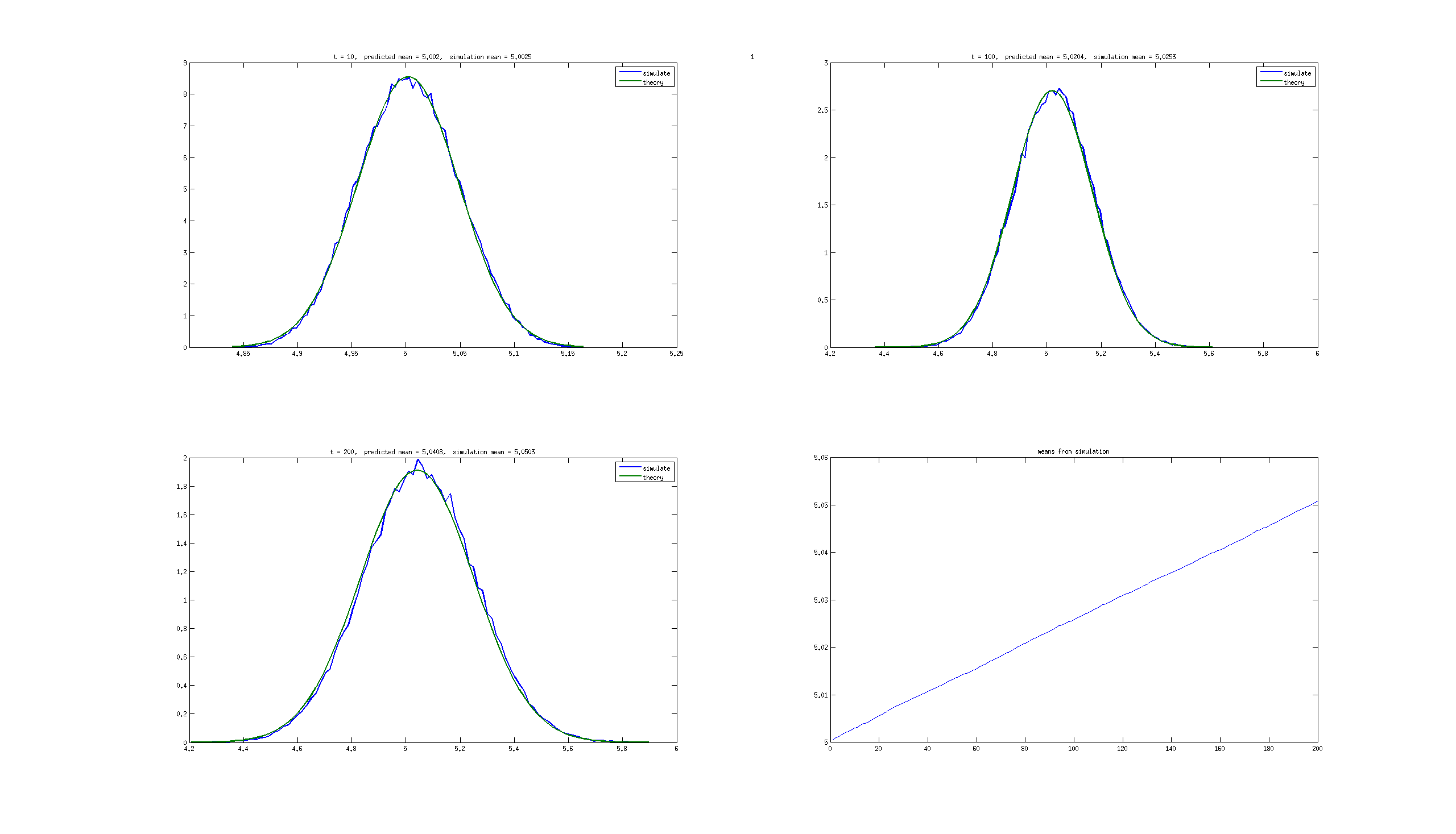}
\caption{$\slope=0.1,p=1$}
\label{0.1.1}
\end{figure}

\clearpage
\subsection{Slope $\slope=1$, various values of $p$}

In Figures \ref{1.0.017} to \ref{1.1}, we give plots with simulated and
theoretical distributions at times $t=10,100,200$ as well as a plot of the
means of the distribution on the interval $[0,200]$.  Agreement to theory is
now poor when $p=0.017$ and $p=0.1$, but is considerably better with larger
$p$ (faster internal adaptation dynamics).

\ns
\begin{figure}[ht]
\picc{0.2}{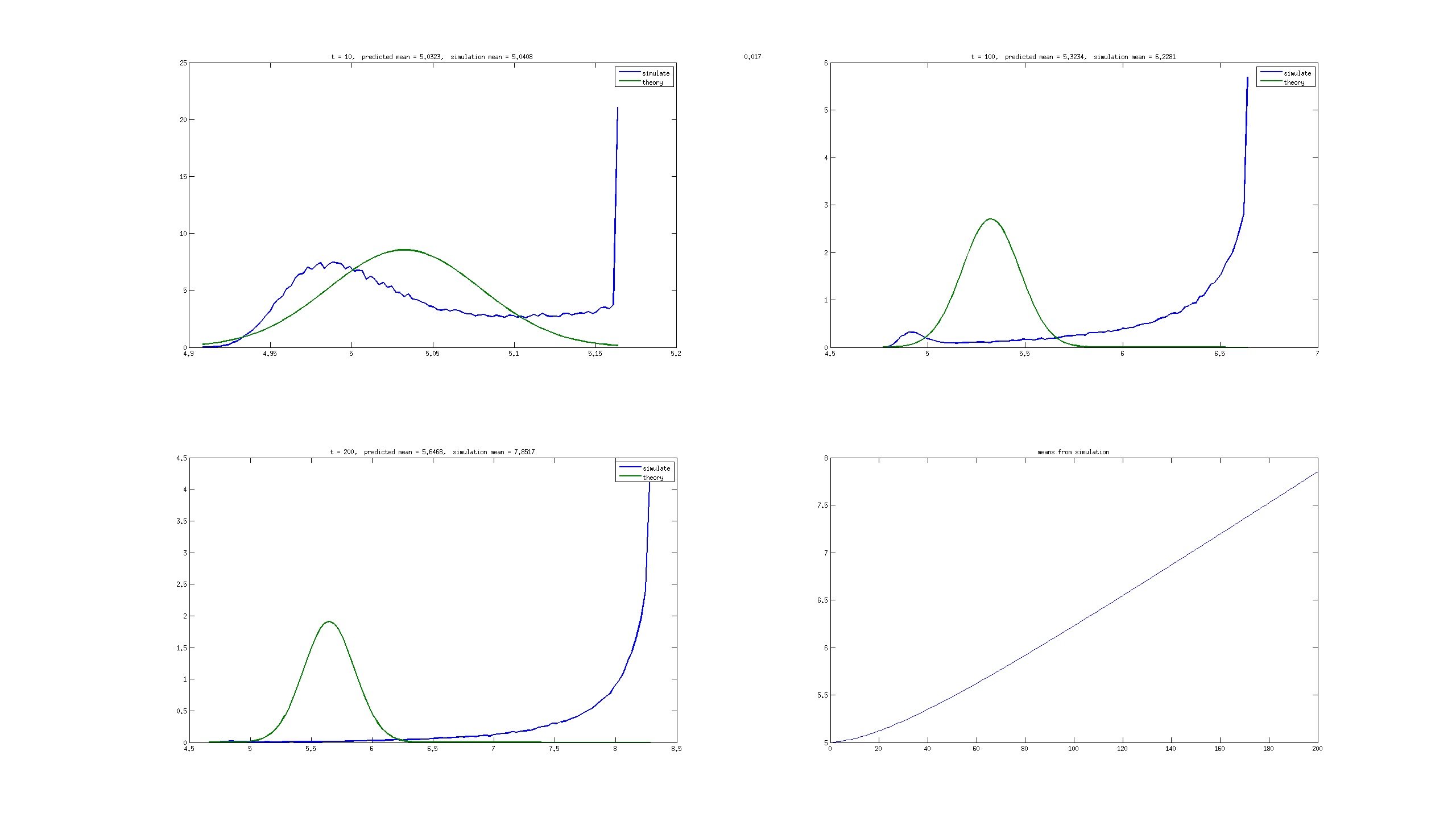}
\ns
\caption{$\slope=1,p=0.017$}
\label{1.0.017}
\end{figure}


\ns
\begin{figure}[ht]
\picc{0.2}{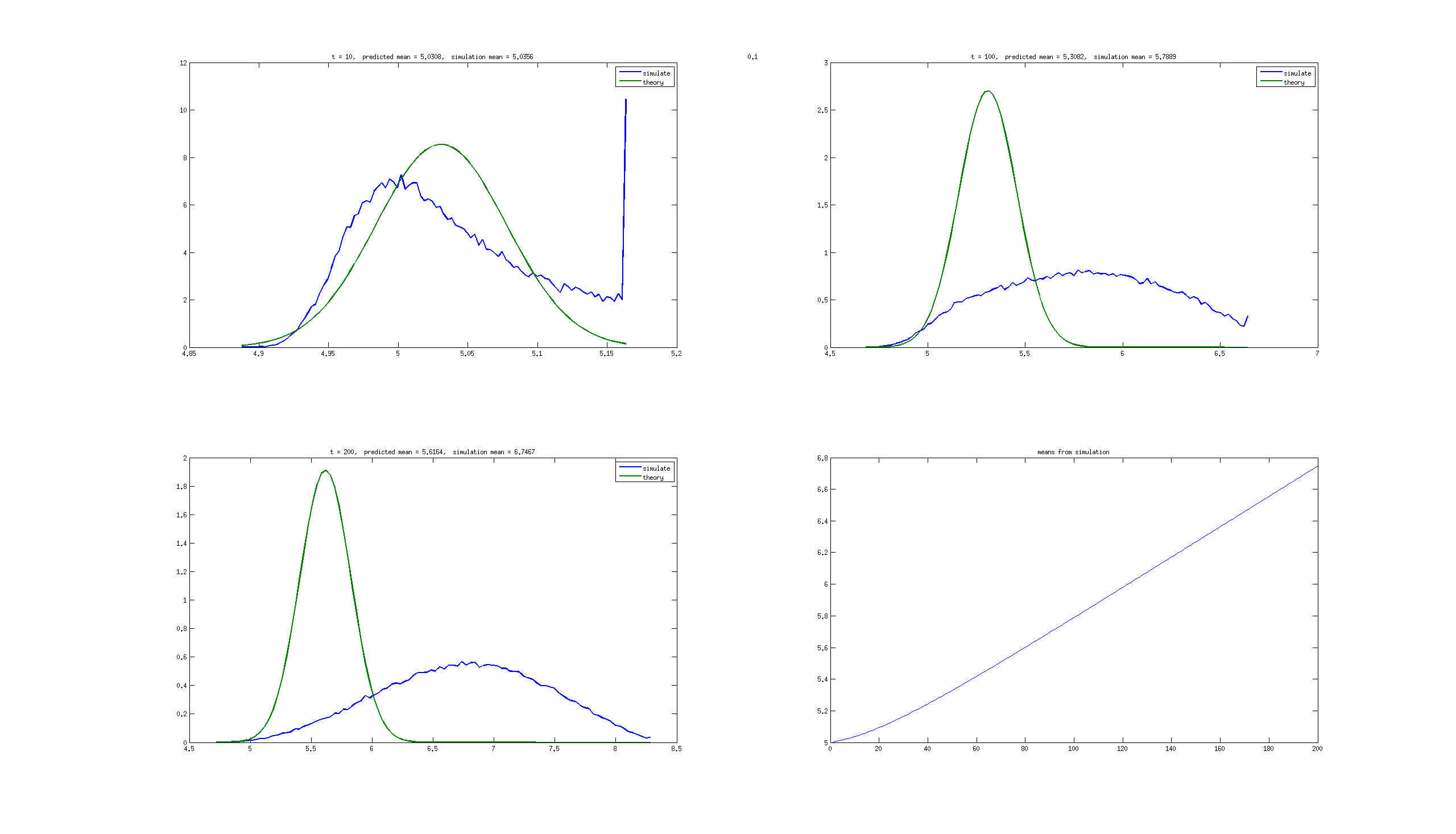}
\caption{$\slope=1,p=0.1$}
\end{figure}

\ns
\begin{figure}[ht]
\picc{0.2}{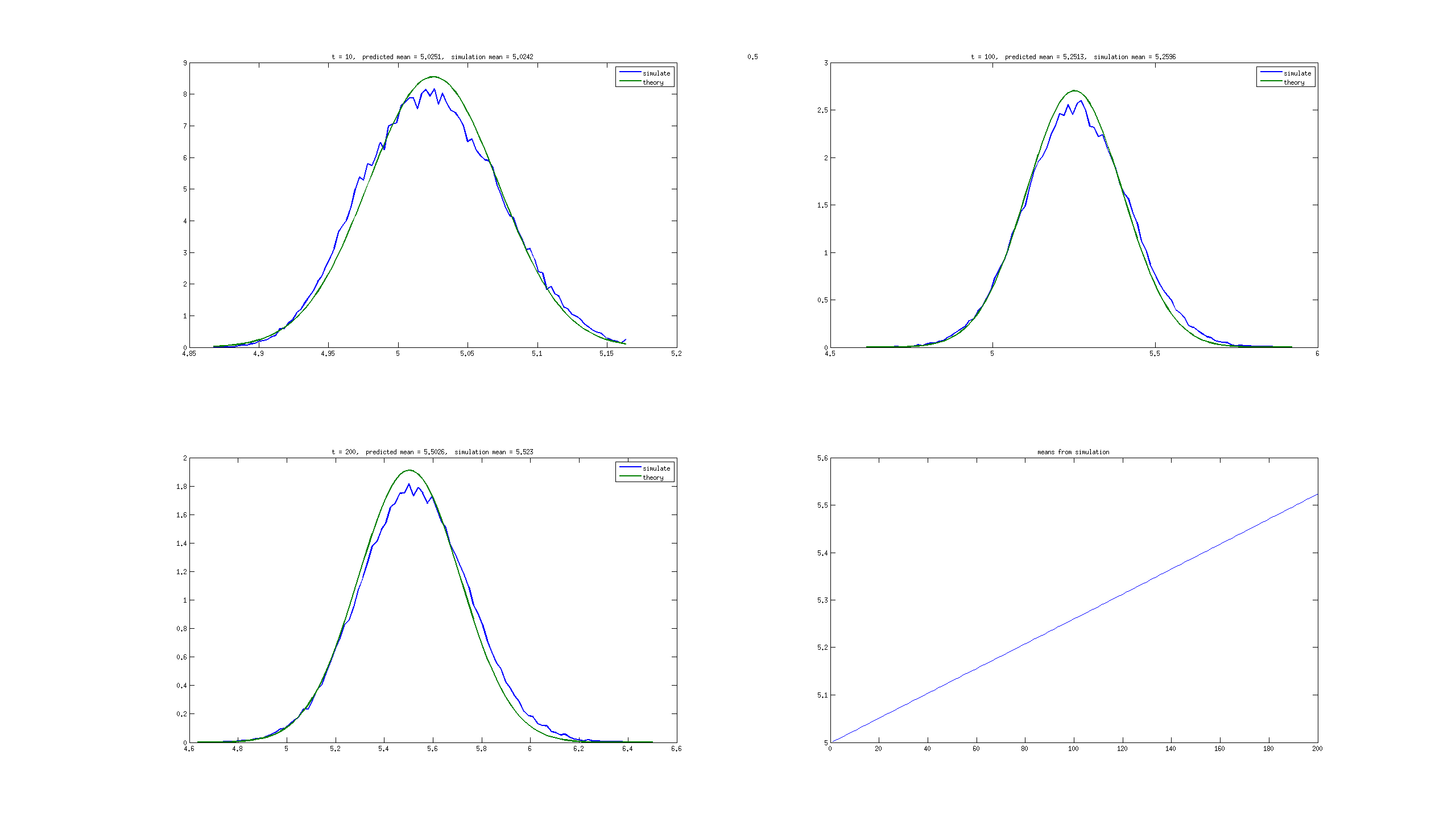}
\caption{$\slope=1,p=0.5$}
\end{figure}

\ns
\begin{figure}[ht]
\picc{0.2}{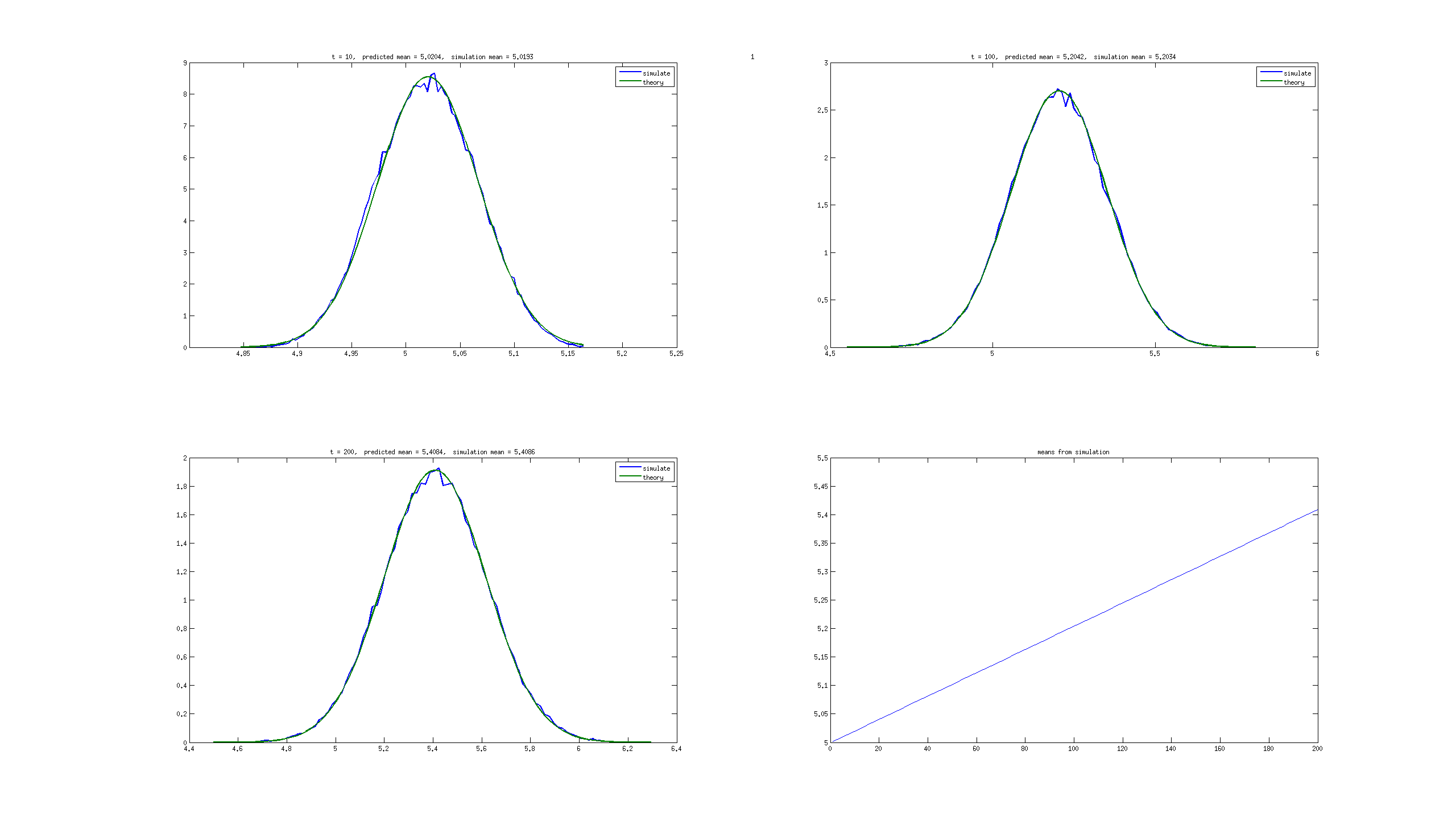}
\caption{$\slope=1,p=1$}
\label{1.1}
\end{figure}

Observe that for low values of $p$, a large number of agents (cells) appear to
be at the same ``forward'' position, moving to the right.
An explanation of this phenomenon is that the probability of tumbling in each
agent is very low.  (The second peak is explained by the fact that 1/2 of the
cells where randomized to starting in a leftward motion.  Thus, it takes a
certain time for these cells to tumble and start moving right, toward higher
nutrient concentrations.)

\clearpage
\subsection{Slopes $\slope=0.1$ to $0.5$, $p=0.017$}

To further understand the behavior when $p=0.017$, which matches theory well
when $\slope=0.1$, but badly when $\slope=1$, we show in
Figures \ref{0.1.0.017s} to \ref{0.5.0.017}
similar graphs for values $\slope=0.1,0.2,0.3,0.4,0.5$.
We display distributions at times $t=20,30,40$.
Asymmetry becomes more obvious for larger slope, and a small ``right-moving
front'' can be seen arising at $\slope=0.5$.

\ns
\begin{figure}[ht]
\picc{0.2}{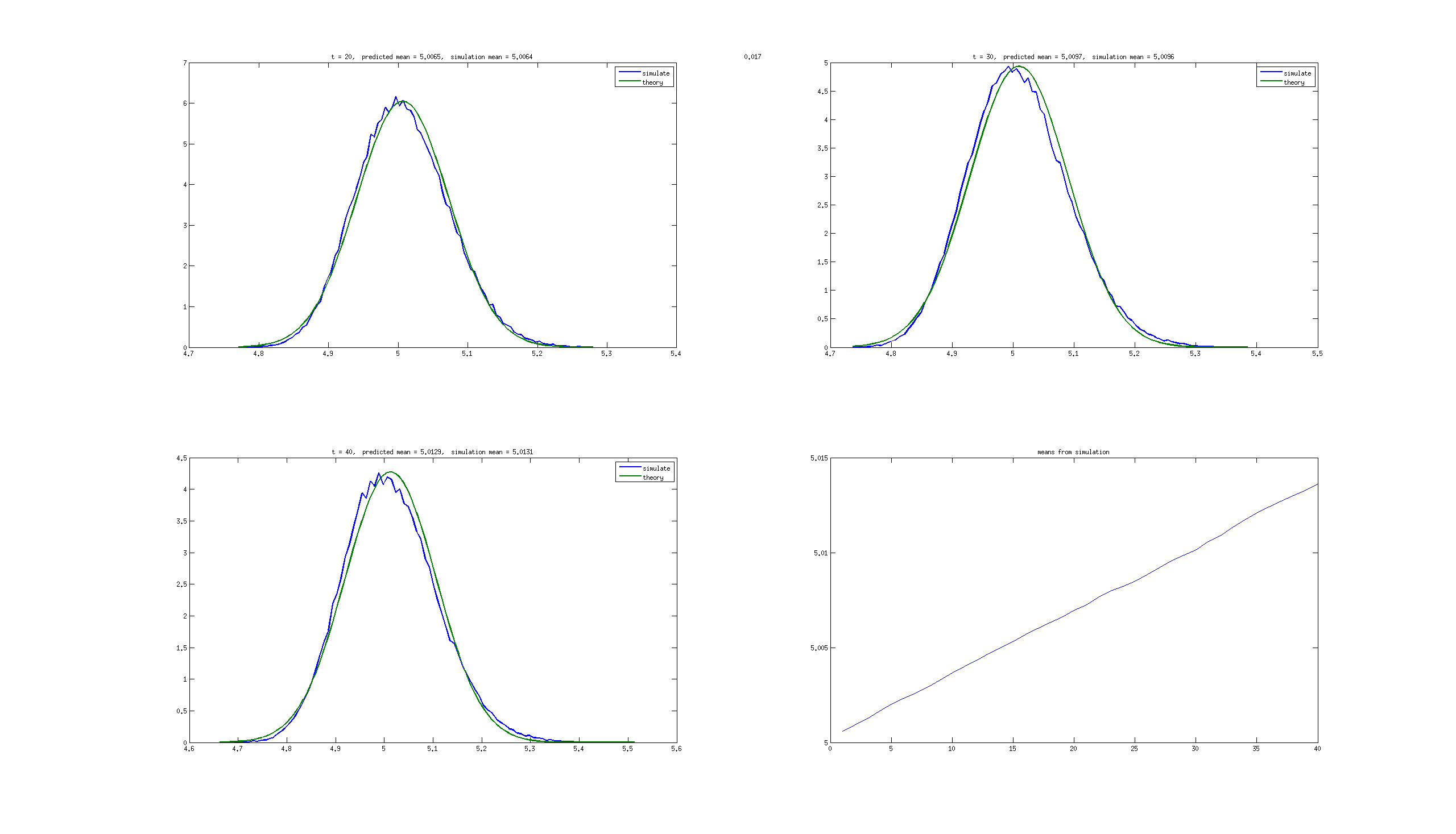}
\ns
\caption{$\slope=0.1,p=0.017$}
\label{0.1.0.017s} 
\end{figure}

\ns
\begin{figure}[ht]
\picc{0.2}{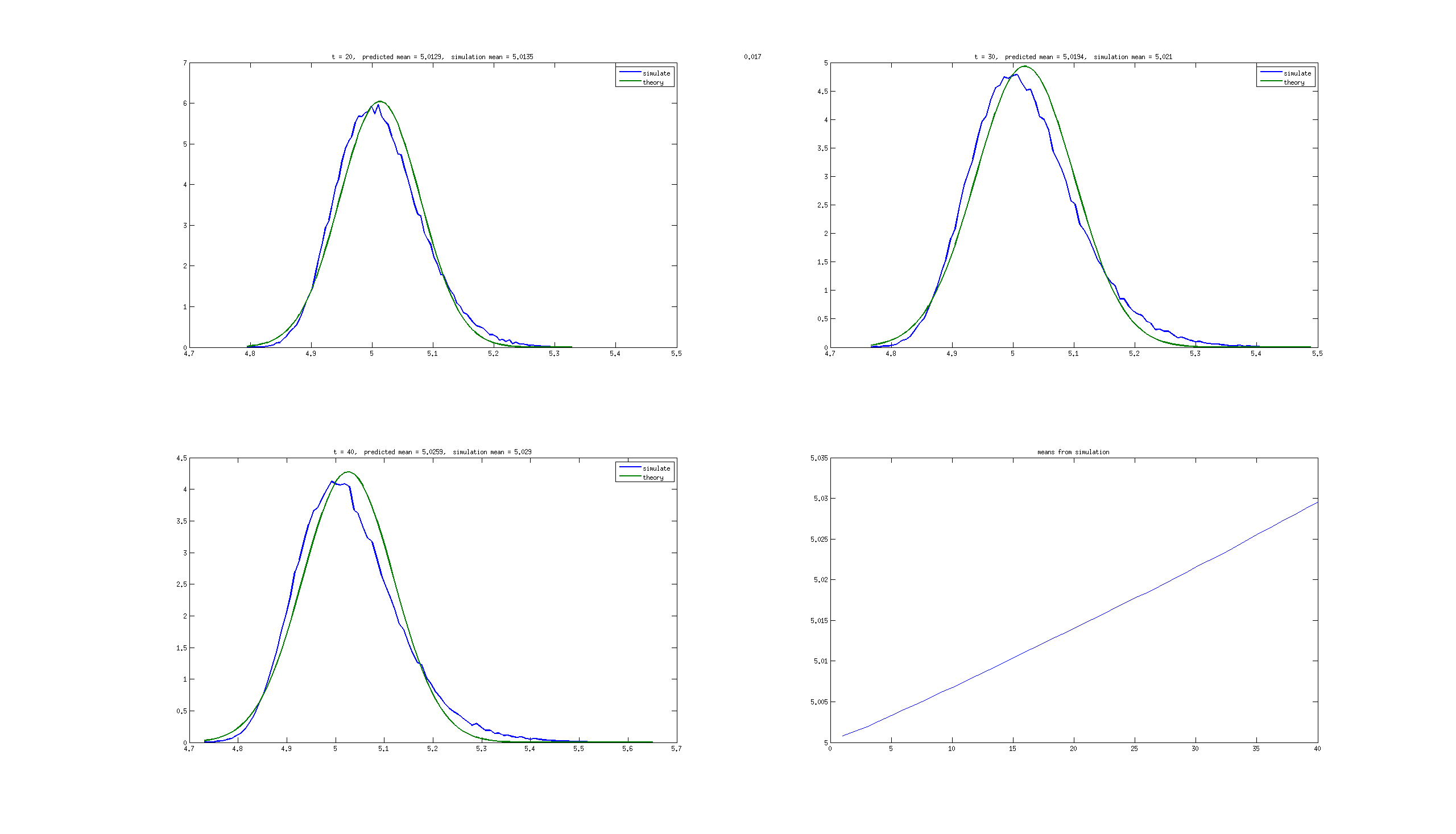}
\ns
\caption{$\slope=0.2,p=0.017$}
\label{0.2.0.017}
\end{figure}

\ns
\begin{figure}[ht]
\picc{0.2}{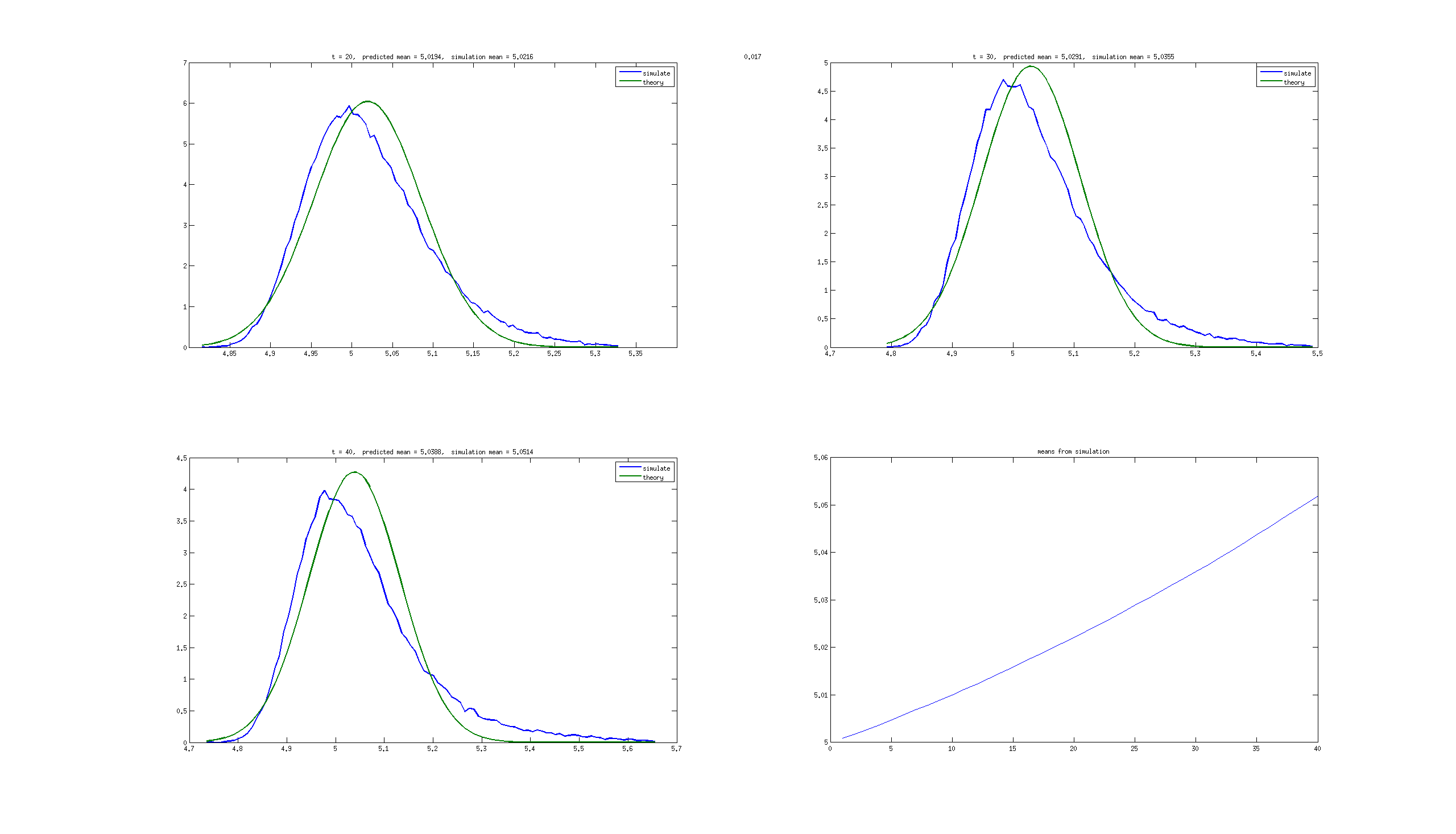}
\ns
\caption{$\slope=0.3,p=0.017$}
\label{0.3.0.017}
\end{figure}

\ns
\begin{figure}[ht]
\picc{0.2}{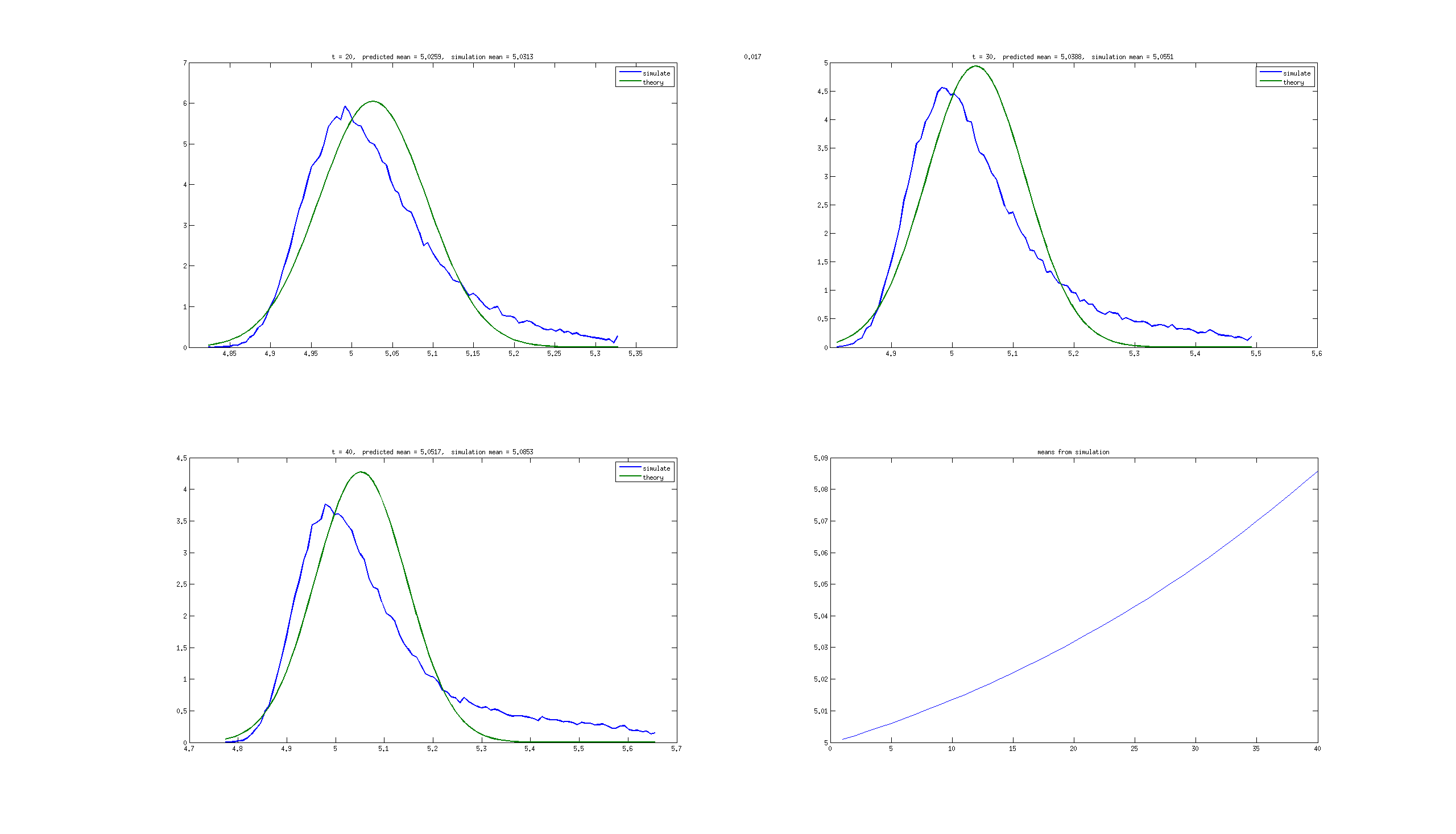}
\ns
\caption{$\slope=0.4,p=0.017$}
\label{0.4.0.017}
\end{figure}

\ns
\begin{figure}[ht]
\picc{0.2}{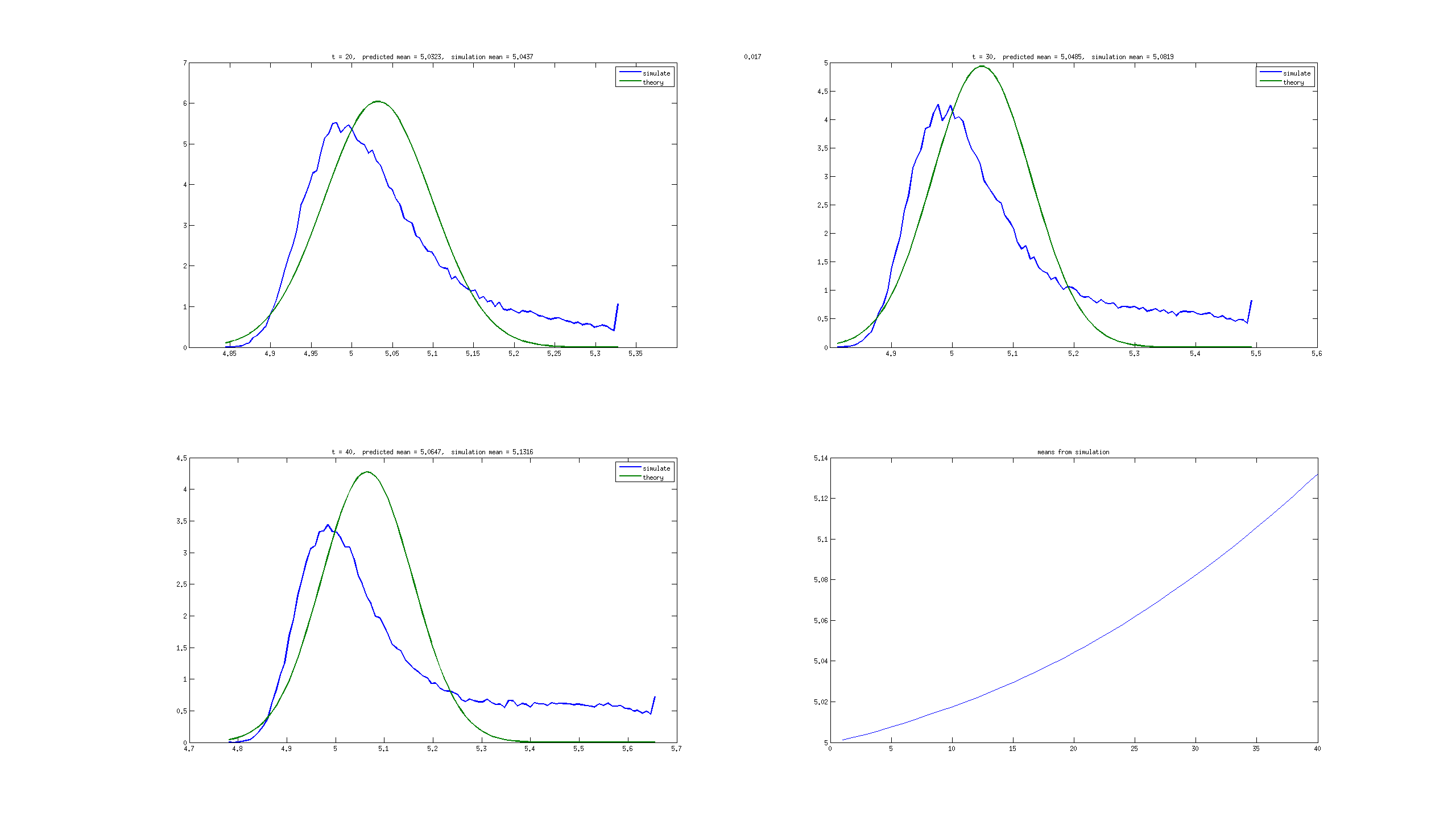}
\ns
\caption{$\slope=0.5,p=0.017$}
\label{0.5.0.017}
\end{figure}

\subsection{Slope $\slope=1$, $p=0.017,0.1,1$: plots of jump rates}

As remarked above, one may expect that the ``right-moving front'' observed for
large slopes $\slope$ and small adaptation time $p$ is due to the jump
(tumbling) rate $\lambda (t)$ (that is, $\lambda (t)dt$ the probability of a jump in an
interval $[t,t+\delta t]$) being low when $p$ is small.  This is indeed seen in the
simulations.  Figures \ref{prob0.017}, \ref{prob01}, and \ref{prob1} show the
mean value of the jump rate $\lambda (t)$, averaged over all 100,000 cells in the
simulation.  Observe that the value approaches zero when $p$ is small.
However, for larger $p$, for example $p=1$, these probabilities rapidly
approach a more or else constant (and larger) value.  These simulations are
performed on the interval $[0,100]$.

The mean and standard deviations of $\lambda (100)$ are, respectively, as follows:
\[
{\mathbf 0.017}: 0.1286, 0.7404; \;
{\mathbf 0.1}:   0.4930, 0.7413; \;
{\mathbf 1}: 1.2232, 0.2499 \,.
\]

\ns
\begin{figure}[ht]
\picc{0.4}{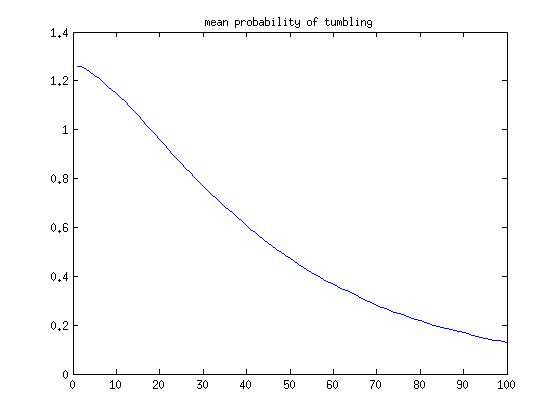}
\ns
\caption{$\slope=1$, $p=0.017$, population means of $\lambda (t)$}
\label{prob0.017}
\end{figure}

\ns
\begin{figure}[ht]
\picc{0.4}{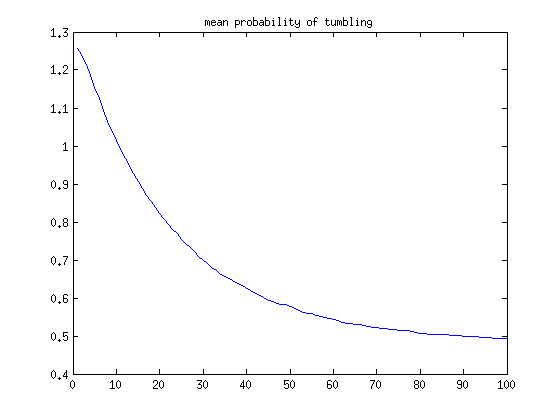}
\ns
\caption{$\slope=1$, $p=0.1$, population means of $\lambda (t)$}
\label{prob01}
\end{figure}

\ns
\begin{figure}[ht]
\picc{0.4}{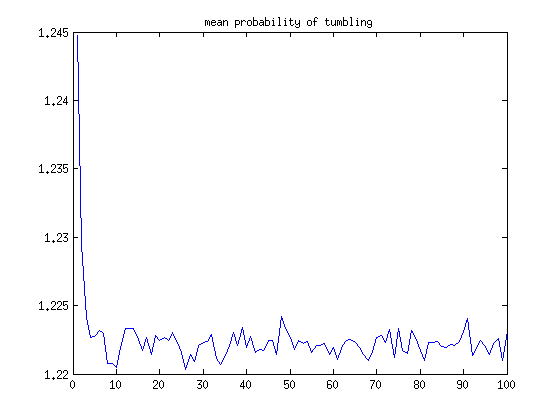}
\ns
\caption{$\slope=1$, $p=1$, population means of $\lambda (t)$}
\label{prob1}
\end{figure}

\clearpage
\newpage
\subsection{Ideal steady-state values of $\lambda $}

When $S'/S\equiv \rho $, the $z=p(a-q)$ variable in our theoretical derivation
evolves according to the following cubic differential equation:
\be{eq:cubic_z}
\frac{dz}{dt}\;=\;
\frac{N}{p}(z+pq)(z+pq-p)(z \pm \nu \rho )
\ee
where the $+$ sign is picked when the agents are moving rightward and the
$-$ sign is picked otherwise.
Physically, one is interested in solutions with positive $y(t)$, so that the
activity $a(t)$ is always in the interval $(0,1)$, which means, in terms of
the $z$ variable, that we must study Equation~\eqref{eq:cubic_z}
on the interval $J = (-pq,-pq+p)$.
We will always assume that $0<q<1$, since otherwise the system has no
equilibria for constant inputs (and in particular, does not perfectly adapt to
step signals).
Observe that $J$ is forward-invariant, since $\dot z=0$ when $z=-pq$ and
when $z=p-pq$.
There is a third root of the cubic at $z=\mp \nu \rho $, and this root belongs 
to the interval $J$ if and only if $pq-p < \pm  \nu \rho  < pq$.

Suppose now that $\rho >0$ (source of nutrient is to the right), as in our
simulations, and consider an agent that is moving rightward (``$+$'' sign).
We consider three cases:
\bi
\item
$p(q-1) < \nu \rho  < pq$:
in this case, as long as there are no jumps (direction reversals),
$z(t) \rightarrow  {\bar z} = -\nu \rho $ as $t\rightarrow \infty $.
\item
$pq < \nu \rho $:
in this case, as long as there are no jumps,
$z(t) \rightarrow  -pq$  as $t\rightarrow \infty $.
\item
$\nu \rho <p(q-1)$:
this case cannot happen, because $q<1$ and $\nu \rho >0$.
\ei
Thus, there is a bifurcation when the parameters satisfy:
\[
pq\;=\;\nu \rho \,.
\]
In terms of the exponential rate for jumps
$\lambda (t) = \frac{r}{p^H}(z(t)+pq)^H$,
we have in the first case that
\[
\lambda (t) \rightarrow  {\bar \lambda} \;=\; \frac{r}{p^H}(pq-\nu \rho )^H
\]
as $t\rightarrow \infty $, and in the second case that
\[
\lambda (t) \rightarrow  0 \,.
\]
Now, suppose that an individual agent has spent enough time moving to the right
that $z$ has achieved a value close to its steady state.
If the parameters are such that $pq<\nu \rho $ or if $pq\approx\nu \rho $,
then the rate $\lambda \approx 0$, and
there will be no further jumps in direction; the agent will continue traveling
rightward forever.
With these parameters, we will observe a front moving rightward.
For example, for $p=0.017$, $q=0.5$, and $\nu =0.0165$, this phenomenon will
happen when the slope is larger than approximately $0.5$, which is
perfectly consistent with our simulations.
On the other hand, for larger $p$, for example $p=1$, this will not happen
until the slope is very large (larger than about $30$).
In summary, for either faster dynamics ($p$ larger) or smaller slopes (smaller
$\slope$), we expect our diffusion approximation to be more accurate.
This is consistent with the simulation results.



\end{document}